\documentclass[11pt]{amsart}
\usepackage{amsmath,amssymb}
\usepackage{graphicx}

\begin{document}

\newtheorem{thm}{Theorem}[section]
\newtheorem{lem}[thm]{Lemma}
\newtheorem{prop}[thm]{Proposition}
\newtheorem{cor}[thm]{Corollary}
\newtheorem{defn}[thm]{Definition}
\newtheorem*{remark}{Remark}

\numberwithin{equation}{section}

\newcommand{\Z}{{\mathbb Z}} 
\newcommand{\Q}{{\mathbb Q}}
\newcommand{\R}{{\mathbb R}}
\newcommand{\C}{{\mathbb C}}
\newcommand{\N}{{\mathbb N}}
\newcommand{\FF}{{\mathbb F}}
\newcommand{\T}{{\mathbb T}}
\newcommand{\fq}{\mathbb{F}_q}

\def\scrA{{\mathcal A}}
\def\scrB{{\mathcal B}}
\def\scrD{{\mathcal D}}
\def\scrH{{\mathcal H}}
\def\scrL{{\mathcal L}}
\def\scrM{{\mathcal M}}
\def\scrN{{\mathcal N}}
\def\scrS{{\mathcal S}}

\newcommand{\rmk}[1]{\footnote{{\bf Comment:} #1}}

\renewcommand{\mod}{\;\operatorname{mod}}
\newcommand{\ord}{\operatorname{ord}}
\newcommand{\TT}{\mathbb{T}}
\renewcommand{\i}{{\mathrm{i}}}
\renewcommand{\d}{{\mathrm{d}}}
\renewcommand{\^}{\widehat}
\newcommand{\HH}{\mathbb H}
\newcommand{\Vol}{\operatorname{vol}}
\newcommand{\area}{\operatorname{area}}
\newcommand{\tr}{\operatorname{tr}}
\newcommand{\norm}{\mathcal N} 
\newcommand{\intinf}{\int_{-\infty}^\infty}
\newcommand{\ave}[1]{\left\langle#1\right\rangle} 
\newcommand{\Var}{\operatorname{Var}}
\newcommand{\Cov}{\operatorname{Cov}}
\newcommand{\Prob}{\operatorname{Prob}}
\newcommand{\sym}{\operatorname{Sym}}
\newcommand{\disc}{\operatorname{disc}}
\newcommand{\CA}{{\mathcal C}_A}
\newcommand{\cond}{\operatorname{cond}} 
\newcommand{\lcm}{\operatorname{lcm}}
\newcommand{\Kl}{\operatorname{Kl}} 
\newcommand{\leg}[2]{\left( \frac{#1}{#2} \right)}  
\newcommand{\SL}{\operatorname{SL}}
\newcommand{\PSL}{\operatorname{PSL}}

\newcommand{\sumstar}{\sideset \and^{*} \to \sum}

\newcommand{\LL}{\mathcal L} 
\newcommand{\sumf}{\sum^\flat}
\newcommand{\Hgev}{\mathcal H_{2g+2,q}}
\newcommand{\USp}{\operatorname{USp}}
\newcommand{\conv}{*}
\newcommand{\dist} {\operatorname{dist}}
\newcommand{\CF}{c_0} 
\newcommand{\kerp}{\mathcal K}

\newcommand{\gp}{\operatorname{gp}}
\newcommand{\Area}{\operatorname{Area}}
\newcommand{\Isom}{\operatorname{Isom}}

\title[eigenvalue spacing distribution for a point scatterer]{On the eigenvalue spacing distribution for a point scatterer on the flat torus}
\author{Ze\'ev Rudnick \and Henrik Uebersch\"ar}
\address{The Raymond and Beverly Sackler Schoool of Mathematical Sciences, Tel Aviv University, Tel Aviv 69978,
Israel.}
\email{rudnick@post.tau.ac.il}
\address{The Raymond and Beverly Sackler Schoool of Mathematical Sciences, Tel Aviv University, Tel Aviv 69978,
Israel.}
\email{henrik@post.tau.ac.il}
\date{\today}
\thanks{Z.R. was
partially supported by the Israel Science Foundation (grant No.
1083/10). H.U. was supported by a Minerva Fellowship.}

\begin{abstract}
 We study the level spacing distribution for the spectrum of a point scatterer on a flat torus. In the $2$-dimensional case, we show that in the weak coupling regime the eigenvalue spacing distribution coincides with that of the spectrum of the  Laplacian (ignoring multiplicties), by showing that the perturbed eigenvalues generically clump with the unperturbed ones on the scale of the mean level spacing.
We also study the three dimensional case, where the situation is very different.
\end{abstract}

\maketitle
\section{Introduction}

\subsection{The  \u{S}eba billiard}
Point scatterers are toy models used to understand aspects of
quantum systems for which the corresponding classical limit is
intermediate between integrable and chaotic. In this paper we study
the spectral statistics of point scatterers on the flat torus (a
"\u{S}eba billiard") in the "weak coupling" regime.

A point scatterer on the torus is formally given by a Hamiltonian
\begin{equation}\label{point scatterer}
-\Delta+\alpha\delta_{x_0}, \quad \alpha\in\R
\end{equation}
where $\Delta$ is the Laplacian, $\alpha$ denotes a coupling
constant and $x_0$ denotes the position of the scatterer.
Mathematically a point scatterer is realised as a self-adjoint
extension of the Laplacian $-\Delta$ acting on functions which
vanish near $x_0$ (see \cite{CdV}). Such extensions are
parameterized by a phase $\varphi\in(-\pi,\pi]$, where $\varphi=\pi$
corresponds to the standard Laplacian ($\alpha=0$ in \eqref{point
scatterer}). We denote the corresponding operator by
$-\Delta_{x_0,\varphi}$.
For $\varphi=\pi$ the eigenvalues are those of the standard
Laplacian. For $\varphi\neq\pi$ ($\alpha\neq 0$) the resulting
spectral problem still has the eigenvalues of the unperturbed
problem, with multiplicity decreased by one, as well as a new set of
$\Lambda_\varphi=\lbrace\lambda_j^\varphi\rbrace$ of eigenvalues
interlaced between the sequence of unperturbed eigenvalues, each
appearing with multiplicity one, and satisfying the spectral
equation
\begin{equation}\label{speceqn}
\sum_{j=0}^\infty
|\psi_j(x_0)|^2\left(\frac{1}{\lambda_j-\lambda}-\frac{\lambda_j}{\lambda_j^2+1}\right)=c_0\tan\frac{\varphi}{2}
\end{equation}
where
$c_0=\sum_{j=0}^\infty\frac{|\psi_j(x_0)|^2}{\lambda_j^2+1}$
and $\lbrace\psi_j\rbrace$ form an orthonormal basis of
eigenfunctions for the unperturbed problem:
$-\Delta\psi_j=\lambda_j\psi_j$. The eigenfunction corresponding to
$\lambda\in\Lambda_\varphi$ is the Green's function
$G_{\lambda}(x,x_0)=(\Delta+\lambda)^{-1}\delta_{x_0}$.

 We denote the unperturbed eigenvalues without multiplicities by
$$\scrN=\lbrace n_0=0<n_1<\dots <n_j<\dots \rbrace
$$
and call them the "norms" of the torus. Note that the perturbed
eigenvalues defined by \eqref{speceqn} are independent of the
location $x_0$ of the scatterer, since in the case of the torus the
sums $\sum_{\lambda_j=n}| \psi_j(x_0)|^2 = \#\{\lambda_j=n\}$ are
independent of $x_0$.

 \subsection{Spacing distributions}

The perturbed eigenvalues $\lbrace\lambda_j^\varphi\rbrace$
interlace with the norms  $\lbrace n_j\rbrace$ as
follows
\begin{equation}
\lambda_0^\varphi<0=n_0<\lambda_1^\varphi<n_1<\cdots<\lambda_k^\varphi<n_k<\cdots
\end{equation}
The nearest neighbour spacings for the norms and for the perturbed 
eigenvalues are defined by
\begin{equation}
\delta_j:=n_{j+1}-n_j, \qquad
\delta_j^\varphi:=\lambda_{j+1}^\varphi - \lambda_j^\varphi
\end{equation}
The {\em mean spacing} between the norms is  defined by
\begin{equation}
\ave{\delta_j}_x:=\frac 1{N(x)} \sum_{n_j\leq x} \delta_j \sim
\frac{x}{N(x)}, \quad x\to\infty
\end{equation}
where
\begin{equation}
N(x):=\#\{j:n_j\leq x \}
\end{equation}
and likewise for the mean spacing $\ave{\delta_j^\varphi}_x$ between
the new eigenvalues. Clearly
\begin{equation}
\ave{\delta_j^\varphi}_x \sim \ave{\delta_j}_x,\quad x\to \infty
\end{equation}

We define normalized nearest neighbour spacings by
\begin{equation}
\^\delta_j:=\frac{\delta_j}{\ave{\delta_j}_x},\qquad
\^\delta_j^\varphi:=\frac{\delta_j^\varphi}{\ave{\delta_j^\varphi}_x}
\end{equation}

We want to determine the distribution of the normalized spacings
$\delta_j^\varphi$.

Shigehara, Cheon et al. \cite{Shigehara1,Shigehara2,Shigehara3}
identify two regimes in the semiclassical limit for a point
scatterer in dimension $2$: In the weak coupling regime the phase
$\varphi$ is fixed as $\lambda\to\infty$. In this regime the authors
predict a Poissonian level spacing distribution for the perturbed
spectrum. The strong coupling regime is when $\varphi$ varies as
$\lambda\to\infty$ so as to satisfy:
$c_0\tan\frac{\varphi}{2}\sim-\frac{1}{4\pi}\log \lambda$.
where they predict level repulsion. In most numerical studies of
this problem, it is the second regime that appears, due to a
truncation procedure \cite{Seba}. For an analytic study of this
regime see \cite{Sieber, BogomolnyGerlandSchmit, RF}.

We deal  with the spectrum of a point scatterer in the weak coupling
regime ($\varphi$ fixed). We will show that the  level spacing
distributions of the norms and of the perturbed spectrum (if either
exists) coincide.
 Since it is generally believed that the spacing distribution of the norms  is Poissonian
(if the torus is either rational, such as the standard torus
$\R^2/\Z^2$, or generic irrational in a suitable sense \cite{EMM}), that would
imply that the perturbed spectrum is also Poissonian.


\subsection{Our results}
We denote the differences between the old and new eigenvalues by
\begin{equation}
d_j:=n_j-\lambda_j^\varphi > 0
\end{equation}
Since $\delta_j-\delta_j^\varphi = d_{j+1}-d_j$, the normalized
nearest neighbour spacings between the norms and the perturbed spectrum are
related by
\begin{equation}
\^\delta_j-\^\delta_j^\varphi \sim
\frac{d_{j+1}-d_j}{\ave{\delta_j}_x}
\end{equation}
We define the mean difference of $d_j$ by
\begin{equation}
\ave{d_j}_x=\frac{1}{N(x)}\sum_{\lambda_j^\varphi\leq x}d_j.
\end{equation}
We will show that the ratio between the mean difference $d_j$ and
the mean spacing $\delta_j$ vanishes:
\begin{thm}\label{thm main}
For a point scatterer on a  $2$-dimensional flat torus,
\begin{equation}
\frac{\ave{d_j}_x}{\ave{\delta_j}_x}\to0, \quad \mbox{as } x\to
\infty.
\end{equation}
\end{thm}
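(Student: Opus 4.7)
The plan is to work directly from the spectral equation \eqref{speceqn}. By the torus homogeneity, $\sum_{\lambda_j=n}|\psi_j(x_0)|^2$ equals (up to a normalization constant) the multiplicity $r(n)$ of $n$ in the Laplace spectrum, so \eqref{speceqn} reduces to $F(\lambda_k^\varphi)=C_1$ with
\[
F(\lambda):=\sum_{n\in\scrN}r(n)\left(\frac{1}{n-\lambda}-\frac{n}{n^2+1}\right),\qquad C_1:=c_0\tan(\varphi/2).
\]
Isolating the dominant pole at $n_k$ gives the key identity
\[
d_k=\frac{r(n_k)}{C_1-F_k(\lambda_k^\varphi)},\qquad F_k(\lambda):=F(\lambda)-\frac{r(n_k)}{n_k-\lambda}.
\]
Since $F_k$ is strictly increasing on $(n_{k-1},n_k)$ (its derivative $\sum_{n\neq n_k}r(n)/(n-\lambda)^2$ is positive), and $\lambda_k^\varphi<n_k$, we have $F_k(\lambda_k^\varphi)\leq F_k(n_k^-)$, where $F_k(n_k^-)$ denotes the finite boundary value obtained after removing the pole.

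The core of the proof is then a lower bound $C_1-F_k(n_k^-)\geq\Psi(n_k)$, with $\Psi(x)\to\infty$ as $x\to\infty$, valid for $k$ in a subset of full density. Granting this,
\[
\sum_{n_k\leq x}d_k\;\leq\;\frac{1}{\Psi(x)}\sum_{n_k\leq x}r(n_k)\;+\;(\text{exceptional contribution}),
\]
and since on a $2$-torus $\sum_{n_k\leq x}r(n_k)$ is the Weyl-type count of Laplace eigenvalues with multiplicity and therefore $O(x)$, the right-hand side is $o(x)$. Dividing by $N(x)$ and recalling $\ave{\delta_j}_x\sim x/N(x)$ yields $\ave{d_j}_x/\ave{\delta_j}_x\to 0$, as required.

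The main obstacle is the bound on the renormalized sum
\[
F_k(n_k^-)=\sum_{n\in\scrN,\,n\neq n_k}r(n)\left(\frac{1}{n-n_k}-\frac{n}{n^2+1}\right)-\frac{r(n_k)n_k}{n_k^2+1},
\]
whose sign and typical magnitude are governed by the distribution of $\scrN$. This is where the $2$-dimensional hypothesis enters --- the $3$-dimensional case, discussed later in the paper, genuinely differs --- because on a $2$-torus the number of distinct norms up to $x$ has density only $1/\sqrt{\log x}$ relative to the Weyl count of Laplace eigenvalues, and this arithmetic mismatch is what should push $F_k(n_k^-)$ to $-\infty$ for most $k$. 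I expect to establish the required bound by decomposing $F_k(n_k^-)$ into a \emph{local} part near $n_k$ (controlled through the gap $\delta_k$ and the multiplicity $r(n_{k-1})$ coming from the adjacent pole) and a \emph{global} part (handled by a Gauss-circle / lattice-point estimate for $\R^2/\Z^2$, or a corresponding equidistribution input for a generic flat $2$-torus), together with a density estimate to discard the small exceptional set of $k$'s for which $\lambda_k^\varphi$ lies anomalously close to $n_{k-1}$.
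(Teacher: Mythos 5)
Your reduction to the spectral equation and the identity $d_k=r(n_k)/\bigl(C_1-F_k(\lambda_k^\varphi)\bigr)$ are correct, as is the observation that $F_k$ is increasing on $(n_{k-1},n_k)$ so that $d_k\le r(n_k)/\bigl(C_1-F_k(n_k^-)\bigr)$ \emph{provided} $C_1-F_k(n_k^-)>0$. But what you call ``the main obstacle''--- the lower bound $C_1-F_k(n_k^-)\ge\Psi(n_k)\to\infty$ off a small exceptional set --- is in fact the entire content of the theorem, and the sketch you give does not reach it. Concretely, $F_k(n_k^-)$ contains the term $r(n_{k+1})/\delta_k$ coming from the pole just above $n_k$; whenever $\delta_k$ is small this term is large and positive, so that $C_1-F_k(n_k^-)$ can be negative and your monotonicity bound gives no information at all on $d_k$. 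So there necessarily is an exceptional set of $k$'s, and it is not enough to know it has density zero: you also need a bound on $\sum_{\mathrm{exc}}d_k$. The a priori bound $d_k\ll n_k^{1/4}$ (the greedy algorithm, used in the paper's Lemma~\ref{approxlemma}) is far too weak for a density-zero set, and the sharper $d_k<\delta_{k-1}$ exposes you to the unbounded tail of the spacing distribution, which for a general lattice is not under control --- indeed for irrational lattices one does not even know the pair correlation of the norms. A complete version of your argument would therefore require a quantitative estimate on $\#\{k\le N(x):\delta_k\text{ small or }r(n_{k+1})\text{ large}\}$ together with a moment bound on $\delta_k$ over that set, neither of which you supply.

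This is precisely the pointwise difficulty that the paper's proof is designed to sidestep. The paper never bounds individual $d_k$; instead it studies the aggregate $\tilde A(\beta)=\sum_j d_j e^{-\beta\lambda_j^\varphi}$ via the heat trace $\sum_j\{e^{-\beta\lambda_j^\varphi}-e^{-\beta n_j}\}$ and a trace formula (Theorem~\ref{TF}), in which the logarithmic behaviour of the two-dimensional Green's function appears cleanly in the ``smooth term'' $\frac{1}{2\pi\i}\int\frac{h(\rho)\,d\rho}{\rho(\log\i\rho-2\pi c(\varphi))}$; both this term and the diffractive remainder are shown to be $\ll 1/\log\frac{1}{\beta}$ (Propositions~\ref{smoothbound}, \ref{diffbound}), giving $\tilde A(\beta)\ll 1/(\beta\log\frac{1}{\beta})$, and Karamata's Tauberian theorem then yields $A(x)=o(x/\sqrt{\log x})$. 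The averaging inherent in the heat kernel and the contour integral absorbs exactly the local fluctuations (close spacings, large multiplicities) that derail a term-by-term estimate. So your proposal is a genuinely different route, but in its present form it has a real gap at the step you yourself flag, and filling that gap looks at least as hard as --- and arguably requires finer arithmetic input than --- the trace-formula argument the paper actually uses.
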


As a consequence, since the differences $d_j\geq 0$ are
non-negative, we deduce
\begin{cor}\label{cor:clumping}
Outside of a zero-density subsequence,
\begin{equation} \label{clumping}
\frac{d_j}{\ave{\delta_j}}\to 0 ,\quad \mbox{as }j \to \infty \;.
\end{equation}
\end{cor}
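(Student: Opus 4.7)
The plan is the standard passage from a Cesàro-type mean statement to pointwise convergence outside a zero-density subsequence, leveraging the non-negativity $d_j\ge 0$. Setting $\bar\delta(x):=\ave{\delta_j}_x\sim x/N(x)$, Theorem~\ref{thm main} asserts that $\ave{d_j}_x/\bar\delta(x)\to 0$, which unfolds to
\begin{equation*}
\frac{1}{N(x)}\sum_{\lambda_j^\varphi\le x}\frac{d_j}{\bar\delta(x)}\longrightarrow 0, \quad x\to\infty.
\end{equation*}

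The key step is a Markov-type inequality. For fixed $\epsilon>0$ set $B_\epsilon(x):=\{j:\lambda_j^\varphi\le x,\;d_j\ge\epsilon\bar\delta(x)\}$. Since each $d_j\ge 0$,
$$\epsilon\bar\delta(x)\cdot\#B_\epsilon(x)\le\sum_{\lambda_j^\varphi\le x}d_j=N(x)\ave{d_j}_x,$$
so $\#B_\epsilon(x)/N(x)\le\epsilon^{-1}\ave{d_j}_x/\bar\delta(x)\to 0$ by the theorem. Hence for every $\epsilon>0$ the set of indices $j$ with $d_j\ge\epsilon\bar\delta$ has upper density zero.

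A diagonal argument then consolidates these into a single zero-density exceptional set. Choose $x_k\uparrow\infty$ rapidly enough that both $\#B_{1/k}(x_k)/N(x_k)<2^{-k}$ and $N(x_{k-1})<2^{-k}N(x_k)$ hold for every $k$; both conditions are achievable in turn, using the Markov bound and $N(x)\to\infty$ respectively. Put
$$E:=\bigcup_{k\ge 1}\{j\in B_{1/k}(x_k):x_{k-1}<\lambda_j^\varphi\le x_k\}.$$
Splitting the count $\#(E\cap\{\lambda_j^\varphi\le x_K\})$ into contributions from $k\le K/2$ and $k>K/2$ and invoking the two size conditions shows $\#(E\cap\{\lambda_j^\varphi\le x_K\})/N(x_K)\to 0$; and for $j\notin E$ with $\lambda_j^\varphi\in(x_{k-1},x_k]$ one has $d_j<\bar\delta(x_k)/k$, which gives $d_j/\ave{\delta_j}\to 0$ as required.

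The only minor technicality is matching the cutoff-dependent $\bar\delta(x)$ used in the Markov step with the cutoff-free $\ave{\delta_j}$ appearing in \eqref{clumping}. This is harmless because $\bar\delta(x)=x/N(x)$ is slowly varying, so $\bar\delta(n_j)$ and $\bar\delta(x_k)$ are of comparable size whenever $n_j\asymp x_k$. I do not anticipate any genuine obstacle: the entire content of the corollary is absorbed by Theorem~\ref{thm main}.
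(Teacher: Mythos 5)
Your plan --- Markov's inequality for the Ces\`aro average, then a diagonal over $\epsilon=1/k$ --- is the right one and is surely what the paper, which gives no proof, has in mind. The density-zero estimate for $E$ is fine. The gap is in the final implication, which your closing remark gestures at but does not actually close.

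For $j\notin E$ with $\lambda_j^\varphi\in(x_{k-1},x_k]$ you get $d_j<\bar\delta(x_k)/k$, and to conclude $d_j/\ave{\delta_j}\to0$ (where $\ave{\delta_j}\sim\bar\delta(\lambda_j^\varphi)$) you need $\bar\delta(x_k)/\bar\delta(\lambda_j^\varphi)$, in the worst case $\bar\delta(x_k)/\bar\delta(x_{k-1})$, to stay bounded. Slow variation controls $L(ct)/L(t)$ only for \emph{fixed} $c$; your construction forces $x_k/x_{k-1}\to\infty$ via $N(x_{k-1})<2^{-k}N(x_k)$, and the condition $\#B_{1/k}(x_k)/N(x_k)<2^{-k}$ is a further lower bound on $x_k$ whose growth you cannot control a~priori, since Theorem~\ref{thm main} as stated carries no rate. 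In the rational case $\bar\delta(x)\sim c\sqrt{\log x}$, so $\bar\delta(x_k)/\bar\delta(x_{k-1})=\sqrt{\log x_k/\log x_{k-1}}$ may be unbounded, and the deduction fails. Your remark that ``$\bar\delta(n_j)$ and $\bar\delta(x_k)$ are comparable whenever $n_j\asymp x_k$'' is true but inapplicable: $n_j$ need not be $\asymp x_k$ for $j$ near the left end of the window. The clean fix is to carry the local scale in the threshold from the start: set $B'_\epsilon:=\{j:d_j\ge\epsilon\,\bar\delta(\lambda_j^\varphi)\}$ and show each $B'_\epsilon$ has density zero by splitting the count at $\sqrt{x}$ --- the low range contributes $N(\sqrt{x})=o(N(x))$, while on $(\sqrt{x},x]$ one has $\bar\delta(\lambda_j^\varphi)\gg\bar\delta(x)$ (since $\bar\delta(\sqrt{x})/\bar\delta(x)\to1/\sqrt2$ in the rational case and $\bar\delta\asymp1$ in the irrational case), so Markov together with Theorem~\ref{thm main} applies. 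The diagonalization over the nested sets $B'_{1/k}$ then proceeds without any scale mismatch. (Alternatively, importing the effective rate $\ave{d_j}_x/\ave{\delta_j}_x=o(1/\sqrt{\log x})$ from the \emph{proof} of Theorem~\ref{thm main} would bound $\bar\delta(x_k)/\bar\delta(x_{k-1})$, but that is more information than the statement you invoke.)
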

That is the norms and the perturbed eigenvalues clump together
generically\footnote{Recently Tudorovskiy, Kuhl and St\"ockmann
\cite{Stoeckmann}  presented a heuristic argument that in the fixed
regime the spacing distribution
should be Poissonian by claiming the bound \eqref{clumping} holds
individually, for {\em all} $j$. We are unable to verify this. } on
the scale of the mean spacing. Therefore:
\begin{cor}
If the spacings $\delta_j$ for the norms have a limiting distribution, then so do the spacings $\delta_j^\varphi$ for the 
perturbed spectrum and the limiting distributions coincide.
\end{cor}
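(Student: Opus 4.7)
The plan is to read the corollary off Corollary~\ref{cor:clumping} by a weak-convergence comparison of empirical distribution functions. The starting identity, already recorded in the excerpt, is $\delta_j - \delta_j^\varphi = d_{j+1} - d_j$, so
\begin{equation*}
\^\delta_j - \^\delta_j^\varphi \sim \frac{d_{j+1} - d_j}{\ave{\delta_j}_x}.
\end{equation*}
By Corollary~\ref{cor:clumping} there is a subset $S \subset \N$ of density one along which $d_j/\ave{\delta_j} \to 0$; the subset $S' := \{j : j, j+1 \in S\}$ then also has density one, and $|\^\delta_j - \^\delta_j^\varphi| \to 0$ as $j \to \infty$ through $S'$.

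Next I would introduce the empirical distribution functions
\begin{equation*}
F_x(s) := \frac{1}{N(x)}\#\{j : n_j \leq x,\ \^\delta_j \leq s\}, \qquad G_x(s) := \frac{1}{N(x)}\#\{j : \lambda_j^\varphi \leq x,\ \^\delta_j^\varphi \leq s\},
\end{equation*}
so that the hypothesis is $F_x(s) \to F(s)$ at every continuity point of some distribution function $F$. Fix $\epsilon>0$. By the previous paragraph the number of $j$ with $n_j \leq x$ and $|\^\delta_j - \^\delta_j^\varphi| \geq \epsilon$ is $o(N(x))$; off this exceptional set $\^\delta_j \leq s$ forces $\^\delta_j^\varphi \leq s + \epsilon$, and symmetrically with the roles swapped. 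This yields the sandwich
\begin{equation*}
F_x(s-\epsilon) - o(1) \;\leq\; G_x(s) \;\leq\; F_x(s+\epsilon) + o(1),
\end{equation*}
after which letting $x \to \infty$ and then $\epsilon \to 0$ at continuity points of $F$ gives $G_x(s) \to F(s)$.

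The argument is essentially elementary once Corollary~\ref{cor:clumping} is in hand; the only minor bookkeeping to check is that the index sets $\{j : n_j \leq x\}$ and $\{j : \lambda_j^\varphi \leq x\}$ differ by at most one (by the interlacing) and that $\ave{\delta_j^\varphi}_x \sim \ave{\delta_j}_x$, so both discrepancies contribute only $o(1)$ corrections to the sandwich above. There is no serious analytic obstacle here: all the spectral content is packaged into Corollary~\ref{cor:clumping}, and what remains is the standard observation that two empirical measures whose atoms differ off a zero-density subsequence share the same weak limit.
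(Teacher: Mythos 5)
Your argument is correct and is exactly the elementary weak-convergence bookkeeping that the paper silently delegates to the reader (the paper states the corollary with only a ``Therefore:'' after Corollary~\ref{cor:clumping}, giving no explicit proof). You correctly pass from $d_j/\ave{\delta_j}\to 0$ on a density-one set $S$ to $(d_{j+1}+d_j)/\ave{\delta_j}\to 0$ on the density-one set $S'=S\cap(S-1)$ (using $d_j\geq 0$ and $\ave{\delta_{j+1}}\sim\ave{\delta_j}$), and the sandwich of empirical distribution functions at continuity points of $F$ is the standard way to conclude. The bookkeeping remarks at the end---the index sets differ by at most one by interlacing, and $\ave{\delta_j^\varphi}_x\sim\ave{\delta_j}_x$---are indeed the only remaining loose ends and you have correctly identified them as $o(1)$ corrections. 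This matches what the paper intends; there is no gap.
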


A similar result holds for hyperbolic surfaces if the point
scatterer is placed in a generic position. We will not give the
details here.

\subsection{Dimension $3$}
The situation is very different for a three-dimensional torus
$\T^3$.
Let $\eta_j^\varphi$ be the perturbed eigenvalues of the point
scatterer and $\eta_j$ the unperturbed eigenvalues counted without
multiplicity (the norms). The ordering is
\begin{equation}
\eta_0^\varphi<0=\eta_0<\eta_1^\varphi <\eta_1<\dots
<\eta_j^\varphi<\eta_j
\end{equation}

As before we let $d_j:=\eta_j-\eta_j^\varphi$ and
$\delta_j=\eta_{j+1}-\eta_j$. We denote by $\ave{d_j}_x$ the average
of the spacings $d_j$, and by $\ave{\delta_j}_x$ of the spacings of
the norms, for $\eta_j^\varphi\leq x$.

\begin{thm}\label{intermediate}
For the $3$-dimensional flat torus, we have
\begin{equation}
\lim_{x\to\infty} \frac{\ave{d_j}_x}{\ave{\delta_j}_x}= \frac{1}{2}.
\end{equation}
\end{thm}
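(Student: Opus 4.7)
The plan is to express $\sum_{\eta_j^\varphi\le x} d_j$ as an integral of the counting-function difference $N^\varphi - N$, reformulate this via the Krein spectral function as the measure of a level set $\{F > c_\varphi\}$, and then compute its asymptotic via a Poisson-summation analysis of $F$ on $\T^3$. The interlacing $\eta_{j-1}<\eta_j^\varphi<\eta_j$ and the jumps of the counting functions give
\[
\sum_{\eta_j^\varphi\le x} d_j \;=\; \int_0^x \bigl(N^\varphi(\lambda) - N(\lambda)\bigr)\,d\lambda \;+\; O(1),
\]
while in $3$D the norms form a set of positive density in $\mathbb N$ (Gauss's three-squares theorem), so $N(x)\sim C_3 x$ and $\ave{\delta_j}_x\to 1/C_3$. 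Theorem~\ref{intermediate} is therefore equivalent to $\int_0^x (N^\varphi - N)(\lambda)\,d\lambda \sim x/2$.

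The spectral equation \eqref{speceqn} applies verbatim in $3$D, since its summand is $O(|k|^{-4})$, hence summable, and $c_0$ in \eqref{speceqn} is finite. Thus the perturbed eigenvalues are the real roots of $F(\lambda) = c_\varphi := c_0\tan(\varphi/2)$, where $F$ is a Herglotz function with simple poles of residue $-r_k$ at each norm $\eta_k$, $r_k$ being the number of representations as a sum of three squares. Because $F$ is monotone on each interval $(\eta_{j-1}, \eta_j)$, sweeping from $-\infty$ to $+\infty$, the integer-valued function $N^\varphi - N$ coincides with the indicator $\mathbf{1}[F(\lambda) > c_\varphi]$, and the theorem reduces to
\[
\bigl|\{\lambda \in [0,x]: F(\lambda) > c_\varphi\}\bigr| \;=\; \tfrac{x}{2} + o(x).
\]

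The key input is the large-$\lambda$ behaviour of $F$. Poisson summation on $\T^3$ gives formally
\[
F(\lambda) \;=\; -\tfrac{1}{4\pi}\sum_{n\in\Z^3\setminus\{0\}}\tfrac{\cos(\sqrt\lambda\,|n|)}{|n|} \;+\; \mathrm{const},
\]
understood as an Abel/conditional sum. In contrast to the $2$D case — where the smooth part of $F$ grows only as $(\log\lambda)/(4\pi)$ and is therefore dominated by the nearest-pole singularity $r_j/(\eta_j - \lambda)$, forcing $d_j = O(r_j/\log\lambda)\to 0$ and yielding Theorem~\ref{thm main} — in $3$D the smooth (scattering-length) part is of order $\sqrt\lambda$, of the same order as the two nearest-pole contributions (since on average $r_j\sim\sqrt{\eta_j}$). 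Consequently $\eta_j^\varphi$ sits at a non-degenerate fractional position in $(\eta_{j-1},\eta_j)$, and averaging over $j$ should yield the value $1/2$. The main obstacle is to make this rigorous as the equidistribution
\[
\int_0^x \operatorname{sgn}(F(\lambda) - c_\varphi)\,d\lambda \;=\; o(x),
\]
which demands van der Corput–type estimates on the oscillatory lattice sums $\sum_{|n|\le T} e^{\pm i\sqrt\lambda|n|}/|n|$ together with the multiplicity bound $r_k = O(\eta_k^{1/2+\epsilon})$ and control of the $3$D sphere error $E(x) = O(x^{2/3+\epsilon})$. The constant $1/2$ is the compact-torus analogue of the $-\pi/2$ limiting s-wave scattering phase of the $3$D point scatterer via Levinson / Birman–Krein.
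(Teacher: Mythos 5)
Your reduction is correct: the interlacing gives $\sum_{\eta_j^\varphi\le x}d_j = \int_0^x(N^\varphi-N)\,d\lambda + O(1)$, and since $\ave{d_j}_x/\ave{\delta_j}_x = A(x)/x$, the theorem is exactly $A(x)\sim x/2$, equivalently equidistribution of $\operatorname{sgn}(F-c_\varphi)$. You also correctly identify the structural reason the answer is $1/2$ in $3$D but $0$ in $2$D (smooth part of the spectral function $\propto\sqrt\lambda$ rather than $\propto\log\lambda$, matching the limiting $s$-wave phase shift). One small inaccuracy: the reference to Gauss's three-squares theorem and $N(x)\sim C_3 x$ applies only to a rational torus; for the general rectangular torus of Section~\ref{sec:3d} the norms are typically distinct and $N(x)\sim cx^{3/2}$. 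This is harmless for your reduction, which only needs $A(x)/x$, but the stated justification is wrong.

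The genuine gap is that the key analytic step is asserted rather than proved, and the proposed route to it is problematic. You reduce everything to $\int_0^x\operatorname{sgn}(F(\lambda)-c_\varphi)\,d\lambda = o(x)$ and say this ``demands van der Corput--type estimates'' on the lattice sums $\sum_n e^{\pm i\sqrt\lambda|n|}/|n|$. But the Poisson-summed representation
$F(\lambda)\sim -\tfrac{1}{4\pi}\sum_{n\ne0}\cos(\sqrt\lambda|n|)/|n|+\mathrm{const}$
is only an Abel/conditional identity for real $\lambda$; the series does not converge absolutely there, and pointwise control of $F$ on the real axis near the dense set of poles $\{\eta_j\}$ (with possibly large multiplicities $r_3(\eta_j)$) is precisely what is hard. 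Without a worked-out estimate, the asserted equidistribution is a heuristic, not a proof. The paper avoids exactly this difficulty: it works with the Green's function on a contour $\Im\rho=-\sigma$, where the lattice series converges absolutely and $D_3^\varphi(\rho)=O(1)$ is trivial, proves a trace formula (Theorem~\ref{TF3d}) that isolates $\tfrac12 h(0)$ as the residue contribution, tests it against $h(\rho)=e^{-\beta\rho^2}$ to get $\sum d_je^{-\beta\eta_j^\varphi}=\tfrac{1}{2\beta}+O(\beta^{-3/4})$ (using the greedy-algorithm bound $d_j\ll\eta_j^{1/8}$ to control the approximation error), and then obtains $A(x)\sim x/2$ by Karamata's Tauberian theorem. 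That is the piece your outline is missing: some regularization or Tauberian mechanism that converts the off-axis control of $F$ into the $o(x)$ equidistribution statement on the real axis.
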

Note that Theorem~\ref{intermediate} does not give any information
on the relation between level spacing distributions for the norms and for the 
perturbed spectrum. For an empirical study of the spectral
statistics in dimension $3$, see \cite{ShigeharaCheon3D}.

\section{Overview of the proof}
\subsection{Our method}

We derive Theorems~\ref{thm main} and \ref{intermediate} from the
asymptotics as $\beta\to 0$ of the sum
$$\sum_{j=0}^{\infty}d_j e^{-\beta\lambda_j^\varphi}$$
To so, we approximate the sum by the difference
$$\sum_{j=0}^\infty\lbrace e^{-\beta\lambda_j^\varphi}-e^{-\beta n_j} \rbrace
$$
 of the heat traces of the operators $-\Delta_{x_0,\varphi}$ and $-\Delta$, which we study
 in \S~\ref{sec:using trace} by using a trace formula which will be developed in \S~\ref{Sec:Trace}, \ref{Sec:Trace2} in the $2$-dim case.
The $3$-dimensional case is treated in \S~\ref{sec:3d}.

\subsection{A trace formula for the point scatterer on the torus}
We work with a rectangular  two-dimensional flat torus
$\T^2=\R^2/2\pi\scrL_0$, where $\scrL_0=\Z(1/a,0)\oplus\Z(0,a)$ for
some $a>0$. Denote by $\scrL$ the dual lattice of $\scrL_0$. The
eigenvalues of the Laplacian on $\T^2$ are the norms of the vectors
of the dual lattice $\scrL$ (cf. section 2 in \cite{RU}). We denote
the set of norms of the dual lattice vectors by
\begin{equation}
\scrN=\lbrace0<n_1<\cdots\rbrace
\end{equation}
and the multiplicity of an eigenvalue $n\in\scrN$ is denoted by
\begin{equation}
r_\scrL(n)=\#\lbrace\xi\in\scrL:|\xi|^2=n\rbrace.
\end{equation}

Recall that the perturbed eigenvalues
$\lbrace\lambda_j^\varphi\rbrace$ interlace with the norms $\lbrace
n_j\rbrace$. The ordering is
\begin{equation}
\lambda_0^\varphi<0=n_0<\lambda_1^\varphi <n_1<\dots
<\lambda_j^\varphi<n_j.
\end{equation}
(That $\lambda_0^\varphi<0$ is given in  \cite{Hillairet}).

We denote $n_j=\rho_j^2$, where $\rho_j>0$ for $j\geq 1$, and
$\lambda_j^\varphi=(\rho_j^\varphi)^2$, where $\rho_j^\varphi>0$ if
$j\geq1$ and  $\Im\rho_0^\varphi>0$ (note that $\lambda_0^\varphi<0$
and $\rho_0^\varphi$ is pure imaginary). The spectral function
\begin{equation}\label{def spectral}
S_\varphi(\rho)=-\frac{1}{\rho^2}+\sum_{j=1}^\infty
r(n_j)\left\{\frac{1}{n_j-\rho^2}-\frac{n_j}{n_j^2+1}\right\}-c_0\tan\frac{\varphi}{2}
\end{equation}
has simple poles at the points $\rho=\pm\rho_j$ and zeroes at the
points $\rho=\pm\rho_j^\varphi$. For $\sigma$ large enough and
$\Im\rho=-\sigma$ we will show that
\begin{equation}
S_\varphi(\rho)=-\frac{1}{2\pi}\log(\i\rho)+\frac{1}{2\pi}D(\rho)+c(\varphi)
\end{equation}
for $c(\varphi)=c_1-c_0\tan\frac{\varphi}{2}$, where $c_1$ is some
real constant, and $|D(\rho)|\ll_\sigma 1$.

Let $h$ be an even function which is analytic in a strip
$|\Im\rho|\leq\sigma'$ for some $\sigma'>\sigma$ and satisfies
$$|h(\rho)|\ll(1+|\Re\rho|)^{-5-\delta}$$
for some $\delta>0$ uniformly in the same strip. We have the
following general trace formula which we prove in
sections~\ref{Sec:Trace}, \ref{Sec:Trace2}.
Let $\sigma>\sigma_0(\varphi)$ be sufficiently large.  Then  for all
$h$ as above, we have
\begin{equation} \label{pretraceA}
\begin{split}
\sum_{j=0}^\infty \lbrace h(\rho_j^\varphi)-h(\rho_j)\rbrace
&=\frac{1}{2\pi\i}\int_{-\i\sigma-\infty}^{-\i\sigma+\infty}\frac{h(\rho)d\rho}{\rho(\log\i\rho-2\pi c(\varphi))}\\
&-\frac{1}{2\pi\i}\int_{-\i\sigma-\infty}^{-\i\sigma+\infty}h'(\rho)\log\left(1-\frac{\scrD(\rho)}{\log\i\rho-2\pi
c(\varphi)}\right)d\rho.
\end{split}
\end{equation}

\subsection{A Tauberian theorem}

To prove Theorem~\ref{thm main} we will employ a Tauberian Theorem
and reduce the problem to studying the asymptotics as $\beta\searrow
0$ of
\begin{equation}
\tilde A(\beta) =\sum_j d_j e^{-\beta \lambda_j^\varphi}
\end{equation}
To study $\tilde A(\beta)$ we prove the following approximation (cf.
\eqref{approx} and Lemma \ref{approxlemma})
\begin{equation}
\sum_{j=0}^\infty d_j e^{-\beta\lambda_j^\varphi}=
\frac{1}{\beta}\sum_{j=0}^\infty \lbrace
e^{-\beta\lambda_j^\varphi}-e^{-\beta n_j}\rbrace +O(\beta^{-1/2}).
\end{equation}
We then use the trace formula \eqref{pretraceA} with
$h(\rho)=e^{-\beta\rho^2}$ to bound $\sum_{j=0}^\infty \lbrace
e^{-\beta\lambda_j^\varphi}-e^{-\beta n_j}\rbrace$ and obtain the
following estimate which is the key result in the proof of
Theorem~\ref{thm main}:
\begin{prop}\label{main bound}
As $\beta \searrow 0$,
\begin{equation}
\tilde A(\beta) =\sum_j d_j e^{-\beta \lambda_j^\varphi} \ll \frac
{1}{\beta \log \frac {1}{\beta}}.
\end{equation}
\end{prop}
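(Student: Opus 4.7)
The plan is to reduce $\tilde A(\beta)$ to the signed heat-trace difference $T(\beta):=\sum_{j=0}^\infty(e^{-\beta\lambda_j^\varphi}-e^{-\beta n_j})$ via the approximation $\tilde A(\beta) = T(\beta)/\beta + O(\beta^{-1/2})$ stated above. Since $\beta^{-1/2}\ll 1/(\beta\log(1/\beta))$ as $\beta\searrow 0$, it suffices to prove $T(\beta)\ll 1/\log(1/\beta)$. I apply the trace formula \eqref{pretraceA} with $h(\rho)=e^{-\beta\rho^2}$ (which is even, entire, and satisfies the required decay on any horizontal strip), obtaining $T(\beta) = I_1+I_2$, and aim to bound each $I_k$ by $O(1/\log(1/\beta))$.

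\textbf{Estimation of $I_2$.} Parametrize $\rho=t-\i\sigma$. Boundedness of $\scrD$ and the growth $|\log\i\rho-2\pi c(\varphi)|\asymp\log(2+|t|)$ on the contour yield $|\log(1-\scrD(\rho)/(\log\i\rho-2\pi c(\varphi)))|\ll 1/\log(2+|t|)$. Combined with $|h'(\rho)|\ll\beta(1+|t|)e^{-\beta t^2}$, the substitution $u=\beta t^2$ reduces the absolute-value bound on $|I_2|$ to an integral of the form $\int_{c\beta}^{\infty}e^{-u}/\log(u/\beta)\,du$. Splitting at $u=1$ handles each piece: for $u\geq 1$ the denominator exceeds $\log(1/\beta)$, giving $O(1/\log(1/\beta))$; for $c\beta\leq u\leq 1$ the substitution $v=\log(u/\beta)$ yields $\beta\int_{\mathrm{const}}^{\log(1/\beta)}e^v/v\,dv\sim 1/\log(1/\beta)$ using the asymptotic $\int e^v/v\,dv\sim e^v/v$. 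Hence $I_2 = O(1/\log(1/\beta))$.

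\textbf{Estimation of $I_1$.} The key preliminary is $I_1(0)=0$. The integrand $1/[\rho(\log\i\rho-2\pi c(\varphi))]$ is regular at $\rho=0$ (the pole of $1/\rho$ is neutralized by the logarithmic divergence of the other factor); its sole pole $\rho^\ast=-\i e^{2\pi c(\varphi)}$ lies above the contour $\Im\rho=-\sigma$ by the hypothesis $\sigma>\sigma_0(\varphi)$; and the integrand decays like $1/(|\rho|\log|\rho|)$. Closing the contour downward therefore gives $I_1(0)=0$, so
$$I_1 = \frac{1}{2\pi\i}\int_{\Im\rho=-\sigma}\frac{e^{-\beta\rho^2}-1}{\rho(\log\i\rho-2\pi c(\varphi))}\,d\rho.$$
I split at $|t|=1/\sqrt{\beta}$. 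On $|t|\leq 1/\sqrt{\beta}$ the estimate $|e^{-\beta\rho^2}-1|\leq C\beta|\rho|^2$ gives a contribution of order $\beta\int_2^{1/\sqrt{\beta}}t/\log t\,dt = O(1/\log(1/\beta))$. On the tail, the $e^{-\beta\rho^2}$ piece is $O(1/\log(1/\beta))$ by the $I_2$-type argument; the ``$-1$'' piece, by $I_1(0)=0$, equals the negative of the analogous integral on $|t|\leq 1/\sqrt{\beta}$, whose real part is odd in $t$ (vanishing on the symmetric interval) and whose imaginary part decays like $1/(|t|\log^2|t|)$ with full-line integral zero, yielding a tail bound of $O(1/\log\sqrt{1/\beta}) = O(1/\log(1/\beta))$.

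\textbf{Main obstacle.} The crux is the last tail analysis for $I_1$: a direct absolute-value bound gives only $\int dt/(|t|\log|t|) = O(\log\log(1/\beta))$, which after multiplication by $1/\beta$ is too weak to match the target. Extracting the extra logarithm requires exploiting both the vanishing $I_1(0)=0$ and the parity and refined decay of the integrand on the contour $\Im\rho=-\sigma$ (odd real part, $1/(|t|\log^2|t|)$-decaying even imaginary part).
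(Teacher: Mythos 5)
Your reduction to $T(\beta)/\beta$ via the approximation lemma and your application of the trace formula with $h(\rho)=e^{-\beta\rho^2}$ match the paper exactly, and your treatment of the diffractive term $I_2$ is essentially the paper's Proposition~\ref{diffbound}: both use $|\log(1-z)|\ll|z|$, the uniform bound $|\scrD(\rho)|\leq f(\sigma)$ on the contour, the logarithmic growth of $|\log\i\rho-2\pi c(\varphi)|$, and a change of variables that exposes the $1/\log(1/\beta)$ decay (your split is at $u=1$, the paper's at $\gamma^\epsilon$, but the mechanism is identical).

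Your treatment of the smooth term $I_1$, however, is genuinely different from the paper's. The paper (Proposition~\ref{smoothbound}) pushes the contour \emph{up} to the real line, collects the residue $e^{\beta e^{4\pi c(\varphi)}}$ at the pole $\rho^\ast=-\i e^{2\pi c(\varphi)}$, handles $\rho=0$ with a small indentation $C_\delta$, and then explicitly evaluates the real-line integral through a chain of substitutions that lands on $\frac12\int_{-\infty}^\infty\frac{(1-e^{-\gamma e^{2t}})\,dt}{t^2+\pi^2/4}$, which it then estimates directly. You instead close the contour \emph{downward} to establish $I_1(0)=0$, subtract, and split at $|t|=1/\sqrt{\beta}$, exploiting $|e^{-\beta\rho^2}-1|\ll\beta|\rho|^2$ on the inner range and, on the tail, the conjugation symmetry $f(-t)=-\overline{f(t)}$ (odd real part, even imaginary part of size $\asymp 1/(t\log^2 t)$ with vanishing full-line integral). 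Both routes give the same $O(1/\log(1/\beta))$ bound; the paper's explicit computation is tidier and yields the constants, while yours is more of a soft cancellation argument that avoids the residue and the explicit formula.

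One small imprecision: your parenthetical that the integrand $1/[\rho(\log\i\rho-2\pi c(\varphi))]$ ``is regular at $\rho=0$'' is not quite right --- near $\rho=0$ it blows up like $1/(|\rho|\,|\log|\rho||)$, so $\rho=0$ is a (logarithmic) branch-point singularity, not a removable one. This does not hurt your argument, because closing the contour \emph{downward} from $\Im\rho=-\sigma$ keeps $\rho=0$ outside the enclosed region; but the justification for $I_1(0)=0$ should rest on (i) $\rho^\ast=-\i e^{2\pi c(\varphi)}$ lying in the strip $-\sigma<\Im\rho<0$, so outside the closed region, (ii) analyticity of $\log\i\rho$ for $\Im\rho<0$ (the principal branch is fine since $\Re(\i\rho)>0$ there), and (iii) the $O(1/\log R)$ decay of the semicircular arc contribution --- not on any claimed regularity at the origin. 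It is also worth being explicit that $I_1(0)$ converges only conditionally (in the symmetric-truncation sense), since the integrand decays only like $1/(|t|\log|t|)$; the $\pm t$ pairing, which is exactly what you use, is what makes the improper integral well-defined.
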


\subsection{Proof of Theorem~\ref{thm main}}

we will use Karamata's Tauberian Theorem (see e.g. \cite{Feller}) 
which deals with the following situation: We say a positive function
$L(x)$ is {\em slowly varying} if $L(kt)\sim L(t)$ as $t\to \infty$
for each fixed $k>0$. We are given a non-decreasing function $A(t)$
on $\R_+$ such that the Laplace transform
\begin{equation}
\tilde A(\beta) :=\int_0^\infty e^{-\beta t} dA(t)
\end{equation}
converges for all $\beta>0$. Suppose there exists two real numbers
$c\geq 0$, $\omega>0$ and a slowly varying function $L(x)$ so that
\begin{equation}
\tilde A(\beta) = \left\{ c+o(1) \right\} \beta^{-\omega}
L(1/\beta), \quad \beta \searrow 0
\end{equation}
Then
\begin{equation}
 A(x) = \left\{ c+o(1) \right \} \frac{x^\omega L(x)} {\Gamma(\omega +1)}, \quad x\to \infty
\end{equation}

We apply Karamata's Tauberian theorem to the function
\begin{equation}
A(x):=\sum_{\lambda_j \leq x} d_j
\end{equation}
which is non-decreasing since $d_j\geq 0$. The Laplace transform
$\tilde A$ is
\begin{equation}
\tilde A(\beta)=\sum_j d_j e^{-\beta \lambda_j^\varphi} \;.
\end{equation}
Proposition~\ref{main bound} implies that $\tilde A(\beta) =
o(1/(\beta\sqrt{\log \frac{1}{\beta}}))$. Thus in Karamata's
theorem, we may take $\omega=1$,  $L(t) = 1/\sqrt{\log t}$, and
$c=0$ to find
\begin{equation}
A(x) = o\left(\frac x{\sqrt{\log x}}\right), \quad x\to \infty.
\end{equation}
Therefore
\begin{equation}
\frac{\ave{d_j}_x}{\ave{\delta_j}_x}=\frac{A(x)}{N(x)}\frac{N(x)}{x}=\frac{A(x)}{x}=o\left(\frac{1}{\sqrt{\log
x}}\right)
\end{equation}
as $x\to\infty$, proving Theorem~\ref{thm main}.


\subsection{Three-dimensional tori}
As in the $2$-dimensional case, Theorem~\ref{intermediate} follows
from the following proposition which we prove in
section~\ref{sec:3d}.
\begin{prop}
We have as $\beta\searrow0$
\begin{equation}
\sum_{j=0}^{\infty}d_j e^{-\beta\eta_j^\varphi} =
\frac{1}{2\beta}+O(\beta^{-3/4}).
\end{equation}
\end{prop}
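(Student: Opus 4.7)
The proof parallels the 2-dimensional case. First, a Taylor expansion
\[
e^{-\beta\eta_j^\varphi}-e^{-\beta\eta_j}=\beta d_j e^{-\beta\eta_j^\varphi}\bigl(1+O(\beta d_j)\bigr)
\]
yields
\[
\sum_{j=0}^\infty d_j e^{-\beta\eta_j^\varphi} = \frac{1}{\beta}\sum_{j=0}^\infty\bigl(e^{-\beta\eta_j^\varphi}-e^{-\beta\eta_j}\bigr) + O\!\Bigl(\beta\sum_{j=0}^\infty d_j^2 e^{-\beta\eta_j^\varphi}\Bigr).
\]
The interlacing $d_j<\delta_{j-1}$ together with the Weyl-type growth of the norm counting function on $\T^3$ shows $\sum_j d_j^2 e^{-\beta\eta_j^\varphi}\ll\beta^{-7/4}$, so the error is $O(\beta^{-3/4})$. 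The task reduces to proving $\sum_j(e^{-\beta\eta_j^\varphi}-e^{-\beta\eta_j})=\tfrac12+O(\beta^{1/4})$.

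Next I construct the 3-dimensional analogue of the trace formula \eqref{pretraceA}. The spectral function $S_\varphi^{(3)}(\rho)$, even in $\rho$ with zeros at $\pm\rho_j^\varphi$ and simple poles at $\pm\rho_j$, arises from the point-scatterer spectral equation in $\R^3$; because the resolvent series diverges in dimension~3, one needs a stronger renormalisation (equivalent to subtracting the $1/(4\pi|x-x_0|)$ singularity of the free Green's function). The regularised function satisfies, on $\Im\rho=-\sigma$ with $\sigma$ large,
\[
S_\varphi^{(3)}(\rho) = \alpha\,\i\rho + c(\varphi) + \scrD_3(\rho),\qquad |\scrD_3(\rho)|\to 0\text{ as }|\rho|\to\infty,
\]
where $\alpha>0$ is a geometric constant forced by the $\i\sqrt{\lambda}/(4\pi)$-behavior of the regularised 3D free Green's function at $x_0$. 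The same argument-principle calculation as in Sections~\ref{Sec:Trace}--\ref{Sec:Trace2} then produces
\begin{align*}
\sum_{j=0}^\infty\bigl(h(\rho_j^\varphi)-h(\rho_j)\bigr)
&= \frac{1}{2\pi\i}\int_{-\i\sigma-\infty}^{-\i\sigma+\infty}\frac{h(\rho)\,d\rho}{\rho-\i c(\varphi)/\alpha}\\
&\quad-\frac{1}{2\pi\i}\int_{-\i\sigma-\infty}^{-\i\sigma+\infty} h'(\rho)\log\!\Bigl(1+\frac{\scrD_3(\rho)}{\alpha\i\rho+c(\varphi)}\Bigr)d\rho.
\end{align*}

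To prove the proposition, apply this with $h(\rho)=e^{-\beta\rho^2}$. For the first integral, shift the contour up to $\Im\rho=0$; depending on the sign of $c(\varphi)/\alpha$ this shift either stays in the region of analyticity or crosses the simple pole $\rho=\i c(\varphi)/\alpha$ with residue $e^{\beta c(\varphi)^2/\alpha^2}=1+O(\beta)$. A direct Gaussian estimate (writing $1/(\rho-\i u)=(\rho+\i u)/(\rho^2+u^2)$ with $u=c(\varphi)/\alpha$ and using $\int_{\R}d\rho/(\rho^2+u^2)=\pi/|u|$) shows that in either case the integral on $\Im\rho=-\sigma$ equals $\i\pi+O(\sqrt\beta)$, so the first integral equals $\tfrac12+O(\sqrt\beta)$. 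For the second integral, the substitution $\rho=u/\sqrt\beta$ combined with $\scrD_3(u/\sqrt\beta)=o(1)$ and $h'(\rho)=-2\beta\rho e^{-\beta\rho^2}$ gives $O(\sqrt\beta)$. Hence $\sum_j(e^{-\beta\eta_j^\varphi}-e^{-\beta\eta_j})=\tfrac12+O(\sqrt\beta)$, and feeding this back into the first step yields $\sum_j d_j e^{-\beta\eta_j^\varphi}=\tfrac{1}{2\beta}+O(\beta^{-1/2})+O(\beta^{-3/4})=\tfrac{1}{2\beta}+O(\beta^{-3/4})$.

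The main obstacle is the construction and quantitative control of $S_\varphi^{(3)}$: the divergent 3-dimensional resolvent sum at the scatterer needs a Krein-type (or zeta) regularisation, and one must rigorously establish uniform $o(1)$ bounds on $\scrD_3(\rho)$ along the working contour, typically via Poisson summation applied to $G_\lambda(x_0,x_0)$. In contrast to the logarithmic main asymptotic in 2D (which produced the decaying factor $1/\log(1/\beta)$), the power-type asymptotic $\alpha\i\rho$ in 3D is precisely what allows a residue-type contribution at the origin of $h(\rho)/\rho$ to produce the nontrivial constant $\tfrac12$ in the heat-trace difference.
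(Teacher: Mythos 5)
Your overall strategy matches the paper's: reduce $\sum_j d_j e^{-\beta\eta_j^\varphi}$ to $\frac1\beta\sum_j\bigl(e^{-\beta\eta_j^\varphi}-e^{-\beta\eta_j}\bigr)$, then use a trace formula in which the linear growth $\sim\i\rho$ of the 3D spectral function produces the constant $\tfrac12$ as a half-residue at the origin, with the diffractive term giving $O(\sqrt\beta)$. The cosmetic difference in your treatment of the smooth part --- peeling the constant $c(\varphi)$ off into the ``smooth'' factor so the pole sits at $\rho=\i c(\varphi)/\alpha$ rather than at $\rho=0$ --- is not wrong, but the paper's choice of writing $S_3^\varphi(\rho)=-\tfrac{\i\rho}{4\pi}+D_3^\varphi(\rho)$ with $D_3^\varphi=O(1)$ (not $o(1)$) is cleaner: it puts the relevant pole exactly at $0$, where oddness kills the principal-value integral and the semicircle contributes precisely $\tfrac12 h(0)$, without any residue bookkeeping at an auxiliary point.

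The genuine gap is in your first step. You assert that the interlacing bound $d_j<\delta_{j-1}$ together with Weyl-type growth of the norm-counting function gives $\sum_j d_j^2 e^{-\beta\eta_j^\varphi}\ll\beta^{-7/4}$. That does not follow: interlacing bounds $d_j$ by an \emph{individual} gap $\delta_{j-1}$, and Weyl's law controls only the \emph{average} gap, not the extreme gaps. Without a pointwise bound on $d_j$ (or on $\delta_j$) the sum $\sum_j\delta_{j-1}^2 e^{-\beta\eta_j}$ could a priori be much larger. The paper supplies the missing ingredient with a greedy-algorithm lemma (Lemma~\ref{greedy3d}): writing each norm as $q(m,n,k)=am^2+bn^2+ck^2$, one chooses $m,n,k$ successively to get a norm within distance $O(\eta_j^{1/8})$ of $\eta_j^\varphi$, which yields the deterministic pointwise bound $d_j\ll\eta_j^{1/8}$. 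Only then does the estimate
\begin{equation*}
\beta\sum_{j} d_j^2 e^{-\beta\eta_j^\varphi}\ll \beta\sum_{j}\eta_j^{1/4}e^{-\beta\eta_j}\ll\beta\int_0^\infty x^{1/4}e^{-\beta x}\,x^{1/2}\,dx\ll\beta^{-3/4}
\end{equation*}
go through, with the $x^{1/2}$ weight coming from the 3D Weyl count $N(x)\ll x^{3/2}$. You should state and prove (or at least cite) this pointwise gap bound explicitly; otherwise the reduction to the heat-trace difference is unjustified.
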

The key tools in the derivation are a trace formula (cf. Theorem
\ref{TF3d}) and an approximation lemma (cf. Lemma \ref{approx3d}).

\section{The trace formula}\label{Sec:Trace}

We follow the same path as in \cite{U} for a compact quotient $\Gamma\backslash\HH$.
\subsection{Overview of the proof}
Let $T>0$ be such that $T\notin\lbrace\rho_j\rbrace\cup\lbrace\rho_j^\varphi\rbrace$, $\sigma>\Im\rho_0^\varphi$
and consider the box $$B(\sigma,T)=\lbrace \rho \mid |\Im\rho|\leq\sigma, |\Re\rho|\leq T\rbrace.$$

For $\sigma$ large enough and $\Im\rho=-\sigma$ we will show that
the spectral function \eqref{def spectral} can be written as
\begin{equation}\label{geomrepresentation}
S_\varphi(\rho)=-\frac{1}{2\pi}\log(\i\rho)+\frac{1}{2\pi}D(\rho)+c(\varphi)
\end{equation}
for $c(\varphi)=c_1-c_0\tan\frac{\varphi}{2}$, where $c_1$ is some real constant, and $|D(\rho)|\ll_\sigma 1$.

Let $h$ be an even function which is analytic in a strip
$|\Im\rho|\leq\sigma'$ for some $\sigma'>\sigma$ and satisfies
$$|h(\rho)|\ll(1+|\Re\rho|)^{-5-\delta}$$ for some $\delta>0$
uniformly in the same strip. A contour integration gives
\begin{equation}\label{boxcontint}
2\sum_{\rho_j^\varphi\in B(\sigma,T)}h(\rho_j^\varphi)-2\sum_{\rho_j\in B(\sigma,T)}h(\rho_j)
=\frac{1}{2\pi\i}\int_{\partial B(\sigma,T)^+}h(\rho)\frac{S_\varphi'}{S_\varphi}(\rho)d\rho
\end{equation}
We may rewrite \eqref{boxcontint} as
\begin{equation}
\begin{split}
&2h(\rho_0^\varphi)-2h(0)+2\sum_{0<\rho_j^\varphi<T}h(\rho_j^\varphi)-2\sum_{0<\rho_j<T}h(\rho_j)\\
=&\frac{1}{\pi\i}\int_{-\i\sigma-T}^{-\i\sigma+T}\frac{h(\rho)d\rho}{\rho(\log\i\rho -2\pi c(\varphi))}\\
&-\frac{1}{\pi\i}\int_{-\i\sigma-T}^{-\i\sigma+T}h'(\rho)\log\left(1-\frac{\scrD(\rho)}{\log\i\rho-2\pi
c(\varphi)}\right)d\rho +\partial B(T)
\end{split}
\end{equation}
where
\begin{equation}
\begin{split}
\partial B(T)=&\frac{1}{\pi\i}\left[h(r)\log\left(\frac{S_\varphi(r)}{\log(\i r)-2\pi c(\varphi)}\right)\right]_{-\i\sigma-T}^{-\i\sigma+T}\\
&+\frac{1}{\pi\i}\int_{-\i\sigma+T}^{\i\sigma+T}h(\rho)\frac{S'_{\varphi}}{S_\varphi}(\rho)d\rho.
\end{split}
\end{equation}

Choose a sequence $\lbrace T_n\rbrace$ away from
$\lbrace\rho_j\rbrace\cup\lbrace\rho_j^\varphi\rbrace$ such that
$\lim_n T_n=\infty$. By use of the asymptotics
\eqref{geomrepresentation} we show that the integral over the
contour $[-\i\sigma-T_n,-\i\sigma+T_n]$ converges absolutely as
$n\to\infty$. Since Weyl's law implies that both traces converge
absolutely, it follows that $\lim_n \partial B(T_n)$ exists. The
main step in the proof of the trace formula is to show that actually
$\lim_n \partial B(T_n)=0$ for a suitable choice of a sequence
$\lbrace T_n\rbrace$.

In Lemma \ref{polybound} we construct a sequence $\lbrace
T_n\rbrace$ which satisfies
$$
|S_\varphi(T_n+\i w)|\ll_\epsilon T_n^{4+\epsilon}.
$$
We then use this bound together with our knowledge of the existence
of   $\lim_n \partial B(T_n)$, which holds in particular for a
certain test function $h_5$ with suitable symmetry properties (cf.
Lemma \ref{testfunction}). We exploit the properties of this
particular test function to bound $\log|S_\varphi|$ on average on
the segments $[-\i\sigma+T_n,T_n]$, namely (cf. Lemma \ref{avbound})
\begin{equation}\label{avbound2}
\left|\int_{-\i\sigma+T_n}^{T_n}\log|S_\varphi(\rho)|d\rho\right|\ll T_n^5.
\end{equation}
which allows us to pass to the limit $T_n\to \infty$ and obtain the
trace formula~\eqref{pretraceA}.
\begin{remark}
We are unable to obtain an individual bound on $\log|S_\varphi(\rho)|$ on the segments $[-\i\sigma+T_n,T_n]$,
but a bound on average suffices for our purposes.
\end{remark}

\subsection{The Green's function on the torus}
The free Green's function on $\R^2$ is given by
\begin{equation}
g_\lambda(x,x_0)=\frac{1}{2\pi}K_0(\i\rho |x-x_0|),\qquad \lambda=\rho^2,
\end{equation}
where $K_0$ denotes the zeroth Bessel function.

From the integral representation 
\begin{equation}
K_0(r)=\int_0^\infty \exp(-r\cosh t)dt
\stackrel{w=\cosh t}{=}\int_1^\infty \frac{e^{-wr}dw}{\sqrt{w^2-1}}, \quad \Re r>0
\end{equation}
we obtain the following integral representation for the free Green's function on $\R^2$
\begin{equation}
g_\lambda(x,x_0)=\frac{1}{2\pi}\int_1^\infty\frac{e^{-\i w\rho|x-x_0|}dw}{\sqrt{w^2-1}},\qquad \lambda=\rho^2,\qquad \Im\rho<0.
\end{equation}
We derive an integral representation for the Green's function on the torus $\T^2$ by the method of images. Let $\Im\rho<0$. We have
\begin{equation}
G_\lambda(x,x_0)
=\sum_{n\in\scrL}g_\lambda(x+n,x_0)
=\frac{1}{2\pi}\int_1^\infty\frac{k_\rho(w;x,x_0)dw}{\sqrt{w^2-1}}
\end{equation}
where
\begin{equation}
k_\rho(w;x,x_0)=\sum_{n\in\scrL}e^{-\i w\rho|x-x_0+n|}.
\end{equation}
Absolute convergence follows from the inequality (note $w\geq1$)
\begin{equation}
|k_\rho(w;x,x_0)|\leq\sum_{n\in\scrL}e^{-\sigma w|x-x_0+n|}
\ll 1+\sum_{0\neq m\in\scrN} r_\scrL(m)e^{-\sigma\sqrt{m}}
\end{equation}

The Bessel function has the asymptotics
\begin{equation}
K_0(z)=-\log(z/2)-\gamma+o(1), \quad z\to0
\end{equation}
where $\gamma$ denotes Euler's constant.
Therefore the free Green's function has the asymptotics
\begin{equation}
g_\lambda(x,x_0)=-\frac{1}{2\pi}\log(\i\rho|x-x_0|/2)-\frac{\gamma}{2\pi}+o(1)
\end{equation}
as $x\to x_0$.
Thus we have the following asymptotics for the Green's function on the torus
\begin{equation}
G_\lambda(x,x_0)=-\frac{1}{2\pi}\log(\i\rho|x-x_0|/2)-\frac{\gamma}{2\pi}+C_\lambda+o(1)
\end{equation}
as $x\to x_0$, where $$C_\lambda=\sum_{n\in\scrL\setminus\lbrace0\rbrace}g_\lambda(x_0+n,x_0).$$

\subsection{}
In view of the spectral expansion of the Green's function $G_\lambda$ the spectral function may be written as
\begin{equation}\label{specfunction}
S_\varphi(\rho)=\lim_{x\to x_0}\lbrace G_\lambda(x,x_0)-\Re G_\i(x,x_0) \rbrace-c_0\tan\frac{\varphi}{2}
\end{equation}
where $$c_0= 1+\sum_{0\neq   n\in\scrN}\frac{r_\scrL(n)}{n^2+1}.$$

We may rewrite \eqref{specfunction} as
\begin{equation}\label{specid}
S_\varphi(\rho)=-\frac{1}{2\pi}\log\i\rho+\frac{1}{2\pi}\scrD(\rho)+c(\varphi)
\end{equation}
where
\begin{equation}
k(x)=\sum_{n\in\scrL\setminus\lbrace0\rbrace}e^{-\i x|n|}
=\sum_{0\neq m\in\scrN} r_\scrL(m)e^{-\i x\sqrt{m}}
\end{equation}
\begin{equation}
\scrD(\rho)=\int_1^\infty \frac{k(\rho w)dw}{\sqrt{w^2-1}}.
\end{equation}
and
\begin{equation}
c(\varphi)=-\frac{1}{2\pi}\Re\scrD(-e^{\i\pi/4})-c_0\tan\frac{\varphi}{2}
\end{equation}
is a real constant.

We have the expression
\begin{equation}
c(\varphi)=c_1-c_0\tan\frac{\varphi}{2}
\end{equation}
where
\begin{equation}
c_1=-\frac{1}{2\pi}\sum_{m\in\scrN} r_\scrL(m)\int_1^\infty \frac{\cos(\sqrt{\frac{m}{2}}w)e^{-\sqrt{\frac{m}{2}}w}dw}{\sqrt{w^2-1}}.
\end{equation}

\begin{lem}
For sufficiently large $\sigma>0$ and $\Im\rho=-\sigma$
\begin{equation}\label{cond}
\frac{|\scrD(\rho)|}{|\log\i\rho-2\pi c(\varphi)|}<1.
\end{equation}
\end{lem}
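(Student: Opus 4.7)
The plan is to show that, for $\rho = \xi - \i\sigma$ with $\xi \in \R$, the numerator $|\scrD(\rho)|$ decays exponentially as $\sigma \to \infty$ (uniformly in $\xi$), while the denominator $|\log\i\rho - 2\pi c(\varphi)|$ grows at least like $\log \sigma$. The claimed strict inequality then follows by choosing $\sigma$ large enough.

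First, I would bound the numerator. Note that $-\i\rho = \sigma + \i\xi$ has real part $\sigma$, so for $w \geq 1$,
\begin{equation*}
|k(\rho w)| \leq \sum_{0\neq m\in\scrN} r_\scrL(m)\, |e^{-\i\rho w\sqrt{m}}| = \sum_{0\neq m\in\scrN} r_\scrL(m)\, e^{-\sigma w\sqrt{m}},
\end{equation*}
a bound independent of $\xi$. Interchanging sum and integral (all terms positive) and performing the substitution $w = \cosh t$ inside, each inner integral becomes exactly $K_0(\sigma\sqrt{m})$, giving
\begin{equation*}
|\scrD(\rho)| \;\leq\; \sum_{0\neq m\in\scrN} r_\scrL(m)\, K_0(\sigma\sqrt{m}).
\end{equation*}
Using the large-argument asymptotic $K_0(z) \sim \sqrt{\pi/(2z)}\, e^{-z}$ together with $r_\scrL(m) \ll m^\epsilon$, the right-hand side is $O(e^{-c\sigma})$ for any $c < \sqrt{n_1}$, uniformly in $\xi$.

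Next I would lower-bound the denominator. Since $\Im\rho = -\sigma < 0$, $\i\rho$ lies in the open right half-plane, where the principal branch of $\log$ is holomorphic. Hence
\begin{equation*}
\Re\log\i\rho = \log|\i\rho| = \log\sqrt{\xi^2+\sigma^2} \;\geq\; \log\sigma,
\end{equation*}
and since $c(\varphi)$ is real,
\begin{equation*}
|\log\i\rho - 2\pi c(\varphi)| \;\geq\; |\Re\log\i\rho - 2\pi c(\varphi)| \;\geq\; \log\sigma - 2\pi|c(\varphi)|.
\end{equation*}
Combining the two estimates,
\begin{equation*}
\frac{|\scrD(\rho)|}{|\log\i\rho - 2\pi c(\varphi)|} \;=\; O\!\left(\frac{e^{-c\sigma}}{\log\sigma}\right)
\end{equation*}
uniformly in $\xi = \Re\rho$, which is strictly less than $1$ once $\sigma$ exceeds a threshold $\sigma_0(\varphi)$ depending only on the lattice $\scrL$ and the phase $\varphi$.

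There is no substantive obstacle here; the only point requiring care is the uniformity in $\Re\rho$, and this is automatic because the $\xi$-dependence enters the series for $k(\rho w)$ only through the unimodular oscillatory factor $e^{-\i\xi w\sqrt{m}}$, which is killed by taking absolute values termwise.
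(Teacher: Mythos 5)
Your proposal is correct and follows essentially the same route as the paper: lower-bound the denominator by $\log\sigma - 2\pi|c(\varphi)|$ and upper-bound $|\scrD(\rho)|$ termwise using $|k(\rho w)|\leq\sum_{0\neq m\in\scrN}r_{\scrL}(m)e^{-\sigma w\sqrt m}$, observing the result is small when $\sigma$ is large. The paper simply calls this upper bound $f(\sigma)$ and invokes it implicitly, whereas you identify the inner integrals as $K_0(\sigma\sqrt m)$ and quote the Bessel asymptotics to make the decay in $\sigma$ explicit — a slightly more detailed presentation of the same idea.
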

\begin{proof}
We have
\begin{equation}
|\log\i\rho-2\pi c(\varphi)|\geq|\log\sqrt{\sigma^2+(\Re\rho)^2}-2\pi c(\varphi)|
\geq\log\sigma-2\pi |c(\varphi)|
\end{equation}
and
\begin{equation}\label{o1bound}
|\scrD(\rho)|\leq\int_1^\infty\frac{|k(\rho w)|dw}{\sqrt{w^2-1}}
\leq \sum_{m\in\scrN} r_\scrL(m)\int_1^\infty \frac{e^{-\sigma\sqrt{m}w}dw}{\sqrt{w^2-1}}=f(\sigma)
\end{equation}
which implies for sufficiently large $\sigma>0$ (in particular it is necessary that $\log\sigma>2\pi|c(\varphi)|$)
\begin{equation}
\frac{|\scrD(\rho)|}{|\log\i\rho-2\pi c(\varphi)|}\leq\frac{f(\sigma)}{\log\sigma-2\pi |c(\varphi)|}<1.
\end{equation}
\end{proof}

Let $h$ be an even function, analytic in a strip
$|\Im\rho|\leq\sigma_0$, for some $\sigma_0>\sigma$, which satisfies
\begin{equation}\label{relaxed h}
 |h(\rho)|\ll(1+|\Re\rho|)^{-5-\delta}
\end{equation}
uniformly in the same strip for some $\delta>0$.
\begin{remark}
We restrict ourselves to a smaller space of test functions here to simplify the presentation of our arguments. It is possible to obtain the trace formula for any test function with uniform decay $|h(\rho)|\ll(1+|\Re\rho|)^{-2-\delta}$. 
\end{remark}
Let $T>0$. Define the box $$B(\sigma,T)=\lbrace\rho\mid|\Re\rho|\leq T,\,|\Im\rho|\leq\sigma\rbrace.$$
\begin{prop}\label{tracethm}
Denote by $n_j=\rho_j^2$, $\rho_j\geq0$, the eigenvalues without counting multiplicities. The new eigenvalues which lie strictly between the $n_j$ are denoted by $\lambda_j^\varphi=(\rho_j^\varphi)^2$. We denote $0>\lambda_0^\varphi=(\rho_0^\varphi)^2$ where $\rho_0^\varphi$ is purely imaginary and $\Im\rho_0^\varphi>0$. Let $\sigma>\Im\rho_0^\varphi$ and $T>0$ s.t. $T\notin\lbrace\rho_j\rbrace_j\cup\lbrace\rho_j^\varphi\rbrace_j$. We have
\begin{equation}\label{tf1}
\begin{split}
&2h(\rho_0^\varphi)-2h(0)+2\sum_{0<\rho_j^\varphi<T}h(\rho_j^\varphi)-2\sum_{0<\rho_j<T}h(\rho_j)\\
=&\frac{1}{2\pi\i}\int_{\partial B(\sigma,T)}h(\rho)\frac{S'_{\varphi}}{S_\varphi}(\rho)d\rho\\
=&\frac{1}{\pi\i}\left\{\int_{-\i\sigma-T}^{-\i\sigma+T}+\int_{-\i\sigma+T}^{\i\sigma+T}\right\}h(\rho)\frac{S'_{\varphi}}{S_\varphi}(\rho)d\rho
\end{split}
\end{equation}
\end{prop}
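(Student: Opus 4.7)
The plan is to apply the weighted argument principle to $S_\varphi$ on the rectangle $B(\sigma,T)$. Reading off from the definition (\ref{def spectral}), the meromorphic function $S_\varphi(\rho)$ has a double pole at $\rho=0$ (from the $-1/\rho^2$ term), simple poles at $\pm\rho_j$ for each $j\ge 1$ (from the terms $1/(n_j-\rho^2)$), and, by definition of the perturbed spectrum, simple zeros at $\pm\rho_j^\varphi$ for $j\ge 0$. The hypotheses $\sigma>\Im\rho_0^\varphi$ and $T\notin\{\rho_j\}\cup\{\rho_j^\varphi\}$ are precisely what is needed to ensure that no zero or pole lies on $\partial B(\sigma,T)$; the zeros enclosed are $\pm\rho_0^\varphi$ together with $\pm\rho_j^\varphi$, $0<\rho_j^\varphi<T$, and the enclosed poles are the double pole at $0$ together with $\pm\rho_j$, $0<\rho_j<T$.

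With the zero-pole structure identified, the weighted argument principle gives that the contour integral equals $\sum_{\text{zeros}} h(\text{zero}) - \sum_{\text{poles}} m\,h(\text{pole})$. Grouping the symmetric pairs via the evenness of $h$, namely $h(\rho_j^\varphi)+h(-\rho_j^\varphi)=2h(\rho_j^\varphi)$ and analogously for the $\rho_j$, and observing that the double pole at the origin contributes $-2h(0)$, one obtains
\[
\frac{1}{2\pi\i}\int_{\partial B(\sigma,T)} h(\rho)\frac{S_\varphi'}{S_\varphi}(\rho)\,d\rho = 2h(\rho_0^\varphi)-2h(0)+2\!\!\sum_{0<\rho_j^\varphi<T}\!\!h(\rho_j^\varphi)-2\!\!\sum_{0<\rho_j<T}\!\!h(\rho_j),
\]
which is the first equality in (\ref{tf1}).

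For the second equality, I would exploit the fact that $S_\varphi$ depends on $\rho$ only through $\rho^2$, hence is even, so $S_\varphi'/S_\varphi$ is odd. Combined with the evenness of $h$, the integrand $h(\rho)(S_\varphi'/S_\varphi)(\rho)$ is odd in $\rho$. The substitution $\rho\mapsto-\rho$ maps the bottom edge of $\partial B(\sigma,T)$, traversed from $-T-\i\sigma$ to $T-\i\sigma$, onto the top edge with the same counterclockwise orientation, and similarly maps the left edge onto the right edge. The oddness of the integrand then forces the contributions from top and bottom to be equal, and likewise for left and right, so the full contour integral equals twice the sum of the bottom and right segment integrals, producing the normalisation $\frac{1}{\pi\i}$ and the two segments $\int_{-\i\sigma-T}^{-\i\sigma+T}+\int_{-\i\sigma+T}^{\i\sigma+T}$ in the statement.

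The proof presents no conceptual obstacle; the technical points to verify are that the series for $S_\varphi$ defines a bona fide meromorphic function on $\C$ whose logarithmic derivative can be formed and integrated along $\partial B(\sigma,T)$ (this follows from the $O(n_j^{-2})$ decay of the summands combined with Weyl's law for $r_\scrL(n_j)$), that each $\pm\rho_j^\varphi$ is actually a simple zero (which follows from the strict interlacing with the simple poles and the monotonicity of $S_\varphi$ on each open interval between consecutive real poles), and that the pole at the origin is exactly of order two, which is clear since only the $-1/\rho^2$ term contributes a singularity at $\rho=0$, justifying the coefficient $-2$ in front of $h(0)$.
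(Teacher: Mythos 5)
Your proposal fills in, correctly and in full detail, exactly the contour-integration-and-symmetry argument the paper merely alludes to: the weighted argument principle applied to the rectangle, the residue bookkeeping at the double pole at the origin, the simple poles at $\pm\rho_j$, the simple zeros at $\pm\rho_j^\varphi$, and the oddness of $h\,S_\varphi'/S_\varphi$ to reduce the full boundary integral to twice the sum of the bottom and right edges. The only point left implicit is that $\pm\rho_j^\varphi$, $j\ge 0$, exhaust the zeros of $S_\varphi$ in the box; this follows from the fact that $S_\varphi(\rho)=F(\rho^2)$ where $F(\lambda)=\sum_j c_j\bigl(\tfrac{1}{n_j-\lambda}-\tfrac{n_j}{n_j^2+1}\bigr)-C$ with $c_j>0$ is a Herglotz-type function, so $\operatorname{sgn}\Im F(\lambda)=\operatorname{sgn}\Im\lambda$ and $F$ has no non-real zeros, hence $S_\varphi$ vanishes only at real or purely imaginary $\rho$.
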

\begin{proof}
By contour integration and symmetry. This is clear in view of the spectral expansion
\begin{equation}
S_\varphi(\rho)=-\frac{1}{\rho^2}+\sum_{j=1}^{\infty}r(n_j)\left\{\frac{1}{n_j-\rho^2}-\frac{n_j}{n_j^2+1}\right\}-c_0\tan\frac{\varphi}{2}.
\end{equation}
\end{proof}

We may rewrite \eqref{tf1} as
\begin{equation}\label{pretraceB}
\begin{split}
&2h(\rho_0^\varphi)-2h(0)+2\sum_{0<\rho_j^\varphi<T}h(\rho_j^\varphi)-2\sum_{0<\rho_j<T}h(\rho_j)\\
=&\frac{1}{\pi\i}\int_{-\i\sigma-T}^{-\i\sigma+T}\frac{h(\rho)d\rho}{\rho(\log\i\rho -2\pi c(\varphi))}\\
&-\frac{1}{\pi\i}\int_{-\i\sigma-T}^{-\i\sigma+T}h'(\rho)\log\left(1-\frac{\scrD(\rho)}{\log\i\rho-2\pi c(\varphi)}\right)d\rho
+\partial B(T)
\end{split}
\end{equation}
where
\begin{equation}
\begin{split}
\partial B(T)=&\frac{1}{\pi\i}\left[h(r)\log\left(\frac{S_\varphi(r)}{\log(\i r)-2\pi c(\varphi)}\right)\right]_{-\i\sigma-T}^{-\i\sigma+T}\\
&+\frac{1}{\pi\i}\int_{-\i\sigma+T}^{\i\sigma+T}h(\rho)\frac{S'_{\varphi}}{S_\varphi}(\rho)d\rho.
\end{split}
\end{equation}


We have the following fact, analogous to Theorem 12 in \cite{U}.
\begin{prop}\label{vanishing}
There exists an increasing sequence $\lbrace
t_l\rbrace\subset\R_+\setminus(\lbrace\rho_j\rbrace_j\cup\lbrace\rho_j^\varphi\rbrace_j)$
such that $\lim_{l\to\infty}t_l=+\infty$ and
$$\lim_{l\to\infty}\partial B(t_l)=0.$$
\end{prop}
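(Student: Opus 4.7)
The plan is to start from the sequence $\{T_n\}$ produced by Lemma~\ref{polybound} and pass to a subsequence $\{t_l\}$ further thinned so as to avoid $\{\rho_j\}\cup\{\rho_j^\varphi\}$ and to enforce a polynomial lower bound $|S_\varphi(t_l+\i w)|\gg t_l^{-M}$ uniformly for $|w|\le\sigma$ (possible since the zeros of $S_\varphi$ form a discrete set). We then show separately that the two constituents of $\partial B(t_l)$ vanish as $l\to\infty$:
\begin{equation*}
E(t_l):=\frac{1}{\pi\i}\left[h(r)\log\!\left(\frac{S_\varphi(r)}{\log(\i r)-2\pi c(\varphi)}\right)\right]_{-\i\sigma-t_l}^{-\i\sigma+t_l},
\end{equation*}
\begin{equation*}
V(t_l):=\frac{1}{\pi\i}\int_{-\i\sigma+t_l}^{\i\sigma+t_l}h(\rho)\frac{S_\varphi'}{S_\varphi}(\rho)\,d\rho.
\end{equation*}

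The corner term $E(t_l)$ is immediate: the two-sided polynomial control on $|S_\varphi|$ at the corners $-\i\sigma\pm t_l$, combined with $|h(-\i\sigma\pm t_l)|\ll t_l^{-5-\delta}$ and the resulting logarithmic estimate $|\log(S_\varphi(r)/(\log\i r-2\pi c(\varphi)))|\ll \log t_l$, gives $E(t_l)=O(t_l^{-5-\delta}\log t_l)\to 0$.

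For $V(t_l)$, the idea is integration by parts. Since the vertical segment $\{t_l+\i w:|w|\le\sigma\}$ avoids poles and zeros of $S_\varphi$, a continuous branch of $\log S_\varphi$ exists on it, and $(S_\varphi'/S_\varphi)=(\log S_\varphi)'$. Integration by parts produces a boundary piece (handled exactly as $E(t_l)$) plus
\begin{equation*}
-\frac{1}{\pi\i}\int_{-\i\sigma+t_l}^{\i\sigma+t_l}h'(\rho)\log S_\varphi(\rho)\,d\rho.
\end{equation*}
Since $h$ is analytic in a slightly wider strip, Cauchy's inequality yields $|h'(\rho)|\ll(1+|\Re\rho|)^{-5-\delta}$ on the segment. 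Splitting $\log S_\varphi=\log|S_\varphi|+\i\arg S_\varphi$, the contribution of the real part is bounded using Lemma~\ref{avbound} by $t_l^{-5-\delta}\cdot t_l^5=t_l^{-\delta}\to 0$, the upper half-segment being handled by the symmetry $S_\varphi(\bar\rho)=\overline{S_\varphi(\rho)}$.

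The main obstacle is the imaginary part $\arg S_\varphi$, which Lemma~\ref{avbound} does not directly address. Here I would invoke the argument principle: the variation of $\arg S_\varphi$ along the (pole- and zero-free) segment is controlled by the winding of $\rho\mapsto S_\varphi(\rho)$ around zero, itself bounded by the number of zeros and poles of $S_\varphi$ in a bounded-width horizontal strip reaching out to real part $t_l$. By Weyl's law this count is $O(t_l^2)$. Choosing the branch so that $\arg S_\varphi$ vanishes at one endpoint (any $2\pi\i$ ambiguity being absorbed into the small factor $h(\i\sigma+t_l)-h(-\i\sigma+t_l)=O(t_l^{-5-\delta})$), we obtain $|\arg S_\varphi|\ll t_l^2$ uniformly on the segment, so the imaginary-part contribution is $O(t_l^{-5-\delta}\cdot t_l^2\cdot\sigma)=O(t_l^{-3-\delta})\to 0$. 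Assembling all estimates gives $\lim_l\partial B(t_l)=0$.
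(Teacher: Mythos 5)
Your overall strategy matches the paper's: split $\partial B(t_l)$ into the corner term and the vertical integral, integrate the latter by parts, and use Lemma~\ref{avbound} to control the $\log|S_\varphi|$ contribution. However, there are two genuine gaps in the argument as written.

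\textbf{The claimed lower bound.} You assert that by thinning the sequence you can enforce $|S_\varphi(t_l+\i w)|\gg t_l^{-M}$ uniformly for $|w|\le\sigma$, ``since the zeros of $S_\varphi$ form a discrete set.'' Discreteness of the zero set gives you a \emph{nonzero} minimum of $|S_\varphi|$ on each compact vertical segment, but says nothing about how fast this minimum may decay as $t_l\to\infty$; a polynomial lower bound is a genuine quantitative statement and would need something like a Jensen or Cartan-type estimate, which you do not supply. The paper deliberately avoids any pointwise lower bound: for the corner term $E(t_l)$ one does not need it because on the line $\Im\rho=-\sigma$ the representation \eqref{geomrepresentation} gives $S_\varphi(\rho)=(\log\i\rho-2\pi c(\varphi))\bigl(1-\scrD(\rho)/(\log\i\rho-2\pi c(\varphi))\bigr)$ with $|\scrD|$ bounded, so the bracket tends to $1$ and $|\log(\cdots)|\ll 1/\log t_l$. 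For the vertical integral, the paper replaces the need for a lower bound by the \emph{averaged} bound of Lemma~\ref{avbound}.

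\textbf{The argument part.} Your bound $|\arg S_\varphi|\ll t_l^2$ via ``the argument principle'' is not correct as stated: the argument principle relates winding number along a \emph{closed} contour to a zero/pole count, and does not directly bound the total variation of $\arg S_\varphi$ along an open pole-free segment. The paper's (implicit) justification is cleaner and stronger: from the spectral expansion \eqref{def spectral}, every term $1/(n_j-\rho^2)$ (and $-1/\rho^2$) has imaginary part with the same sign as $\Im(\rho^2)$, so on $\rho=T_n+\i w$ with $w\ne 0$ one has $\Im S_\varphi\ne 0$ of fixed sign; hence a continuous branch of $\arg S_\varphi$ remains in an interval of length $\le 2\pi$, i.e.\ $|\arg S_\varphi|\ll 1$, and the imaginary contribution is $O(t_l^{-5-\delta})$, not merely $O(t_l^{-3-\delta})$. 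Relatedly, when you invoke Lemma~\ref{avbound} to bound $\bigl|\int h'\log|S_\varphi|\bigr|$ by $\max|h'|\cdot\bigl|\int\log|S_\varphi|\bigr|$, this requires $\log|S_\varphi|$ to have essentially constant sign on the segment; this is true only after the renormalization $\log|S_\varphi|=\log\bigl(c^{-1}T_n^{-5}|S_\varphi|\bigr)+\log(cT_n^5)$, where the first summand is $\le 0$ by Lemma~\ref{polybound}. Without this remark (or the unavailable lower bound), the step from the averaged bound of Lemma~\ref{avbound} to a bound on the weighted integral is not justified.

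With those two repairs — replacing the lower-bound claim by the asymptotic \eqref{geomrepresentation} on $\Im\rho=-\sigma$ for $E(t_l)$, and replacing the argument-principle estimate by the sign-of-$\Im S_\varphi$ observation together with the renormalization remark — your argument becomes essentially the proof in the paper.
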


Since the sums and integrals  in \eqref{pretraceB} (where we take
$T=t_l$) converge absolutely as $t_l\to\infty$,
Proposition~\ref{vanishing}, which we will prove in
section~\ref{Sec:Trace2}, gives the general trace formula:
\begin{thm}\label{TF}
Let $h$ be as \eqref{relaxed h}. Let $\sigma>0$ be large enough s.t.
condition \eqref{cond} is satisfied. We have
\begin{equation}\label{pretrace}
\begin{split}
&\sum_{j=0}^\infty \lbrace h(\rho_j^\varphi)-h(\rho_j)\rbrace\\
=&\frac{1}{2\pi\i}\int_{-\i\sigma-\infty}^{-\i\sigma+\infty}\frac{h(\rho)d\rho}{\rho(\log\i\rho-2\pi c(\varphi))}\\
&-\frac{1}{2\pi\i}\int_{-\i\sigma-\infty}^{-\i\sigma+\infty}h'(\rho)\log\left(1-\frac{\scrD(\rho)}{\log\i\rho-2\pi c(\varphi)}\right)d\rho.
\end{split}
\end{equation}
\end{thm}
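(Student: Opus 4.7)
The plan is to derive Theorem~\ref{TF} by passing to the limit in the finite-box formula \eqref{pretraceB} of Proposition~\ref{tracethm}, taking $T=t_l$ along the sequence $\{t_l\}$ supplied by Proposition~\ref{vanishing}. Along that sequence the boundary contribution $\partial B(t_l)$ vanishes, so once one verifies that the sums and the two horizontal contour integrals on the line $\Im\rho=-\sigma$ all converge absolutely, one gets the identity. The lone term $-2h(0)$ on the left is absorbed into the $j=0$ pairing via $h(\rho_0)=h(0)$ (recall $n_0=0$, hence $\rho_0=0$), after which dividing by $2$ matches the factor $1/(\pi\i)$ in \eqref{pretraceB} with $1/(2\pi\i)$ in \eqref{pretrace}.

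First I would check convergence of the left-hand side. Weyl's law on $\T^2$ forces $n_j\gg j$, hence $\rho_j\gg\sqrt{j}$, and the interlacing $\lambda_0^\varphi<0<\lambda_1^\varphi<n_1<\lambda_2^\varphi<n_2<\cdots$ yields the same lower bound on $\rho_j^\varphi$. Combined with the hypothesis $|h(\rho)|\ll(1+|\Re\rho|)^{-5-\delta}$, each of $\sum_j h(\rho_j)$ and $\sum_j h(\rho_j^\varphi)$ converges absolutely, so the truncated sums on the left of \eqref{pretraceB} tend to $2\sum_{j=0}^\infty\{h(\rho_j^\varphi)-h(\rho_j)\}$.

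Next I would control the two improper integrals on $\Im\rho=-\sigma$. For the first, $|\log\i\rho-2\pi c(\varphi)|\gg\log(2+|\Re\rho|)$ on that line (as in the lemma preceding \eqref{cond}), so the integrand $h(\rho)/[\rho(\log\i\rho-2\pi c(\varphi))]$ is $O((1+|\Re\rho|)^{-6-\delta}/\log(2+|\Re\rho|))$ and is integrable. For the second, the analyticity of $h$ in the wider strip $|\Im\rho|\le\sigma_0$ transfers the polynomial decay to $h'$ via Cauchy's integral formula on a small disk inscribed in that strip; by \eqref{o1bound} the logarithmic factor is $O(1/\log(2+|\Re\rho|))$, so the integrand decays like $(1+|\Re\rho|)^{-5-\delta}/\log(2+|\Re\rho|)$ and is again integrable. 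Therefore the finite integrals on $[-\i\sigma-t_l,-\i\sigma+t_l]$ converge, as $l\to\infty$, to the improper integrals appearing in \eqref{pretrace}, each doubled by the factor $1/(\pi\i)$.

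Finally, invoking Proposition~\ref{vanishing} to send $\partial B(t_l)\to 0$, passing to the limit in \eqref{pretraceB}, and dividing through by $2$ yields \eqref{pretrace}. The real obstacle in this architecture is not the limit step itself but Proposition~\ref{vanishing}: the vertical sides of $\partial B(T)$ carry an integral of $h\cdot S'_\varphi/S_\varphi$, and exhibiting a sequence $\{t_l\}$ along which this vanishes requires the polynomial bound on $|S_\varphi(T_n+\i w)|$ from Lemma~\ref{polybound} together with the on-average bound \eqref{avbound2} on $\log|S_\varphi|$. Granted those inputs, the present step is a clean absolute-convergence finish.
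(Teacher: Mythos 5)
Your proof is correct and follows the paper's route exactly: start from the finite-box identity \eqref{pretraceB}, pass to the limit along the sequence $\{t_l\}$ of Proposition~\ref{vanishing}, check absolute convergence of both sides, and clean up the factor of $2$ and the $-2h(0)=-2h(\rho_0)$ term. The paper states the absolute-convergence step without elaboration, so your fleshing out of the details (Weyl's law giving $n_j\gg j$ for the sums, and the combination of $|\log\i\rho-2\pi c(\varphi)|\gg\log|\Re\rho|$ with the decay of $h$ and $h'$ inherited via Cauchy's formula for the integrals) is a faithful expansion of the paper's one-line justification.
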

We call the first term on the RHS of \eqref{pretrace} the "smooth
term", and the second one the "diffractive term".

\section{Proof of Proposition~\ref{vanishing}}\label{Sec:Trace2}

We begin with the following lemma.
\begin{lem}\label{polybound}
There exists an increasing sequence $\lbrace T_n\rbrace\subset\R_+\setminus(\lbrace\rho_j\rbrace_j\cup\lbrace\rho_j^\varphi\rbrace_j)$ such that $\lim_{n\to\infty}T_n=+\infty$ and for $-\sigma\leq w\leq0$ we have
\begin{equation}
|S_\varphi(T_n+\i w)|\ll_\epsilon T_n^{4+\epsilon}.
\end{equation}
\end{lem}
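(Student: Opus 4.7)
The plan is to choose $T_n$ in the unit interval $[n,n+1]$ to sit at a polynomially small distance from all poles $\rho_j$ and zeros $\rho_j^\varphi$ of $S_\varphi$, and then to bound $|S_\varphi(T_n+\i w)|$ termwise from its spectral expansion. Since the poles $\rho_j=\sqrt{n_j}$ in $[n,n+1]$ correspond to the distinct norms in $[n^2,(n+1)^2]$, which are at most $O(n)$ by a Gauss-circle count of dual-lattice vectors in the annulus $n\le|\xi|\le n+1$, and since the zeros $\rho_j^\varphi$ interlace the poles, both sets contribute $O(n)$ points to $[n,n+1]$. Excluding an $n^{-3}$-neighborhood of each removes a set of measure $O(n^{-2})<1$, so for large $n$ a valid $T_n$ exists in the complement and automatically avoids $\{\rho_j\}\cup\{\rho_j^\varphi\}$.

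Next I would derive a uniform lower bound $|(T_n+\i w)^2 - n_j|\gg n^{-2}$ for all $n_j\in\scrN$ and $w\in[-\sigma,0]$. Writing $(T_n+\i w)^2 - n_j = (T_n^2 - w^2 - n_j) + 2\i T_n w$, neither the real nor the imaginary part is bounded below on its own, so a dichotomy in $|w|$ is needed. If $|w|\ge n^{-3}$ then $|\operatorname{Im}|=2T_n|w|\ge 2n^{-2}$; if $|w|<n^{-3}$ then $w^2\le n^{-6}$ is negligible and, by the selection of $T_n$, $|T_n^2 - n_j| = |T_n-\rho_j|(T_n+\rho_j) \ge n^{-3}\cdot n = n^{-2}$ for the nearest $\rho_j$ (for which $T_n+\rho_j\gg n$), while more distant $\rho_j$ give much larger separation, so in this case $|\operatorname{Re}|$ dominates.

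With this in hand I would use the algebraic identity
\[
\frac{1}{n_j-\rho^2} - \frac{n_j}{n_j^2+1} = \frac{1+n_j\rho^2}{(n_j-\rho^2)(n_j^2+1)}
\]
and split the series for $S_\varphi$ into the ranges $n_j<T_n^2/2$, $n_j\in[T_n^2/2,2T_n^2]$, and $n_j>2T_n^2$. In the short range $|n_j-\rho^2|\ge T_n^2/4$, each term is $\ll r(n_j)/n_j$, and the sum is $O(\log T_n)$ via $\sum_{n_j\le X}r(n_j)/n_j\ll\log X$. In the long range $|n_j-\rho^2|\ge n_j/2$ and partial summation against Weyl's law gives $\sum_{n_j>2T_n^2}r(n_j)/n_j^2\ll T_n^{-2}$, so this contribution is $O(1)$. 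The middle range is the main one: there $(1+n_j\rho^2)/(n_j^2+1)=O(1)$, and combining the distance bound from Step 2 with $\sum_{n_j\sim T_n^2}r(n_j)\ll T_n^2$ yields $\sum_{n_j\sim T_n^2}r(n_j)/|n_j-\rho^2| \ll n^2\cdot T_n^2 \ll T_n^4$, which fits inside the claimed $T_n^{4+\epsilon}$.

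The main obstacle is Step 2: the distance bound $|(T_n+\i w)^2 - n_j|\gg n^{-2}$ cannot be obtained from $\operatorname{Re}$ or $\operatorname{Im}$ alone on the whole range $w\in[-\sigma,0]$, and requires the careful dichotomy in $|w|$ together with the $n^{-3}$ separation built into the choice of $T_n$. The pole and zero counting of Step 1 and the range-by-range series estimates of Step 3 are then routine.
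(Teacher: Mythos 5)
Your approach is sound and arrives at the same bound, but it is a genuinely different route from the paper, with one small gap you should patch.

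The paper selects $T_n$ by first exploiting the fact that the set of norms $\{n_j\}$ has mean spacing bounded below (of order $\sqrt{\log n_j}$ for rational lattices, order $1$ for irrational), so it can pick consecutive norms $n_{k(n)}, n_{k(n)+1}$ with gap $\gg 1$ and put $T_n$ at a well-chosen midpoint of the pole/zero subdivision of $(\rho_{k(n)},\rho_{k(n)+1})$. This yields the stronger separation $|T_n-\rho_j|\gg T_n^{-1}$, hence $|n_j-(T_n+\i w)^2|\gg 1$ for all $j$, and the factor $T_n^{4+\epsilon}$ then comes from $O(T_n^2)$ terms each $\ll T_n^{2\epsilon}$ times the global factor $|\i-\mu_n(w)|\ll T_n^2$. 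You instead place $T_n$ in each unit interval by a pigeonhole over $O(n)$ points (Gauss circle count in the annulus), which only buys separation $|T_n-\rho_j|\geq n^{-3}$ and $|n_j-(T_n+\i w)^2|\gg n^{-2}$, but you recover the $T_n^4$ because you split the series differently (short/middle/long by the size of $n_j/T_n^2$) and use the Weyl-law bound $\sum_{n_j\le X}r(n_j)\ll X$ in the middle range rather than the pointwise bound $r(n)\ll n^\epsilon$ used by the paper. Your selection step is more elementary, sidestepping any input about the mean spacing of the norms, at the cost of a cruder distance bound; each approach compensates for its weaker ingredient elsewhere.

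The gap: you only exclude $n^{-3}$-neighborhoods of the $\rho_j,\rho_j^\varphi$ lying in $[n,n+1]$, but then you invoke $|T_n-\rho_j|\geq n^{-3}$ \emph{for all} $j$ in Step~2. A pole $\rho_j$ slightly below $n$ or slightly above $n+1$ could land within $n^{-3}$ of $T_n$ even though its own point lies outside $[n,n+1]$. The fix is immediate — run the pigeonhole over the $O(n)$ poles and zeros in, say, $[n-1,n+2]$; the measure removed is still $O(n^{-2})$ — but as written your distance bound is not justified.
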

\begin{proof}
We can choose an infinite increasing subsequence of Laplacian eigenvalues $\lbrace n_{k(n)}\rbrace_n$ such that $n_{k(n)+1}-n_{k(n)}=\rho_{k(n)+1}^2-\rho_{k(n)}^2\gg1$. This is because the mean spacing between the norms $\lbrace n_j\rbrace$ is of size $\sqrt{\log n_j}$ if the lattice $\scrL$ is rational and of size $1$ if the lattice is irrational. Recall that between two consecutive eigenvalues $n_{k(n)}=\rho_{k(n)}^2$ and $n_{k(n)+1}=\rho_{k(n)+1}^2$ there is exactly one new eigenvalue $\lambda^\varphi_{k(n)+1}=\chi_{k(n)+1}^2$ and $\chi_{k(n)+1}\in(\rho_{k(n)},\rho_{k(n)+1})\subset\R_{+}$ is a zero of the function $S_{\varphi}(\rho)$, whereas $\rho_{k(n)}$, $\rho_{k(n)+1}$ are singularities of the same function.

So we may choose an infinite sequence
\begin{equation}\label{infseq}
T_{n}=
\begin{cases}
\tfrac{1}{2}(\rho_{k(n)}+\chi_{k(n)+1}),\;\text{if}\;|\chi_{k(n)+1}-\rho_{k(n)}|\geq|\chi_{k(n)}-\rho_{k(n)+1}|\\
\\
\tfrac{1}{2}(\rho_{k(n)+1}+\chi_{k(n)+1}),\;\text{otherwise.}
\end{cases}
\end{equation}
with $|\rho_{k(n)}-\rho_{k(n)+1}|\gg|\rho_{k(n)}+\rho_{k(n)+1}|\asymp T_{n}^{-1}$. Note in particular that for all $\rho_j\in\R_+$,
\begin{equation}\label{lboundTN}
|\rho_j-T_n|\geq\tfrac{1}{4}|\rho_{k(n)}-\rho_{k(n)+1}|\gg T_n^{-1}.
\end{equation}

Let $\mu_{n}(w)=(T_{n}+\i w)^{2}$, $w\in[-\sigma,0]$. We have for any $\epsilon>0$
\begin{equation}
\begin{split}
|S^\varphi(T_n+\i w)|\ll
&\sum_{j=0}^{\infty}r_\scrL(n_j)\left|\frac{1}{n_{j}-\mu_n(w)}-\frac{1}{n_{j}-\i}\right|\\
&\ll_\epsilon |\i-\mu_{N}(w)|\sum_{j=0}^{\infty}\frac{n_{j}^{\epsilon}}{|n_{j}-\mu_{N}(w)||n_{j}-\i|}
\end{split}
\end{equation}
where we have used the bound $r_\scrL(n)\ll_\epsilon n^\epsilon$. Fix $\alpha\in(\epsilon,1)$. We split the sum into a central part satisfying $\inf_{w\in[-\sigma,0]}|n_{j}-\mu_{N}(w)|<n_{j}^{\alpha}$ and a corresponding tail. For convenience we let $I_{n}(n_{j})=\inf_{w\in[-\sigma,0]}|n_{j}-\mu_{n}(w)|$. The first sum is estimated by
\begin{equation}
\begin{split}
&\sum_{I_{n}(n_{j})<n_{j}^{\alpha}}\frac{n_{j}^{\epsilon}}{|n_{j}-\mu_{n}(w)||n_{j}-\i|}\\
\leq&\#\lbrace j\mid I_{n}(n_{j})<n_{j}^{\alpha}\rbrace\, \max_{I_{n}(n_{j})<n_{j}^{\alpha}}\,\sup_{w\in[-\sigma,0]}\left\{\frac{n_{j}^{\epsilon}}{|n_{j}-\mu_{n}(w)||n_{j}-\i|}\right\}.
\end{split}
\end{equation}
Now if $n_{j}>T_{n}^{2}$ then $I_{n}(n_{j})=n_{j}-T_{n}^{2}$. It follows
\begin{equation}
\begin{split}
&\#\lbrace j\mid I_{n}(n_{j})<n_{j}^{\alpha}\rbrace\\
\leq&\#\lbrace j\mid n_{j}\leq T_{n}^{2}\rbrace
+\#\lbrace j\mid n_{j}-n_{j}^{\alpha}<T_{n}^{2}\rbrace.
\end{split}
\end{equation}
Let
\begin{equation}
C(\alpha)=\#\lbrace j\mid n_{j}\leq 2^{1/(1-\alpha)}\rbrace
\end{equation}
and observe that $n_{j}>2^{1/(1-\alpha)}$ implies $n_{j}^{\alpha-1}<\tfrac{1}{2}$. So $n_{j}>2^{1/(1-\alpha)}$ together with $n_{j}(1-n_{j}^{\alpha-1})<T_{n}^{2}$ implies
\begin{equation}
n_{j}<2n_{j}(1-n_{j}^{\alpha-1})<2T_n^2.
\end{equation}
Hence
\begin{equation}
\begin{split}
&\#\lbrace j\mid n_{j}(1-n_{j}^{\alpha-1})<T_{n}^{2}\rbrace\\
\leq&\,\#\lbrace j\mid n_{j}\leq2^{1/(1-\alpha)},\;n_{j}(1-n_{j}^{\alpha-1})<T_{n}^{2}\rbrace\\
&+\#\lbrace j\mid n_{j}>2^{1/(1-\alpha)},\;n_{j}(1-n_{j}^{\alpha-1})<T_{n}^{2}\rbrace\\
\leq&\,C(\alpha)+\#\lbrace j\mid2^{1/(1-\alpha)}<n_{j}<2T_{n}^{2}\rbrace\\
\ll&\; T_n^2.
\end{split}
\end{equation}
It follows that
\begin{equation}
\#\lbrace j\mid I_{n}(n_{j})<n_{j}^{\alpha}\rbrace\ll T_{n}^{2}.
\end{equation}
By the same observations as above we see that $I(n_{j})<n_{j}^{\alpha}$ implies $n_{j}\leq\max\lbrace2^{1/(1-\alpha)},2T_{n}^{2}\rbrace$. Also for any $j\geq0$ we have (cf. \eqref{lboundTN}) $$|\rho_{j}-T_{n}|\geq\tfrac{1}{4}|\rho_{k(n)}-\rho_{k(n)+1}|\gg T_{n}^{-1}$$ which implies
\begin{equation}
\begin{split}
|n_{j}-\mu_{n}(w)|=|\rho_{j}^{2}-(T_{n}+\i w)^{2}|
=&\;|\rho_{j}-T_{n}-\i w||\rho_{j}+T_{n}+\i w|\\
\geq&\;|\rho_{j}-T_{n}|(\rho_{j}+T_{n})\\
\gg&\;1.
\end{split}
\end{equation}
Since $|n_{j}-\i|\geq1$ we have
\begin{equation}
\max_{I_{n}(n_{j})<n_{j}^{\alpha}}\,\sup_{w\in[-\sigma,0]}\left\{\frac{n_{j}^{\epsilon}}{|n_{j}-\mu_{n}(w)||n_{j}-\i|}\right\}
\ll T_{n}^\epsilon.
\end{equation}

The tail can be bounded as follows
\begin{equation}
\begin{split}
\sum_{I_n(n_{j})\geq n_{j}^{\alpha}}\frac{n_{j}^{\epsilon}}{|n_{j}-\mu_{n}(w)||n_{j}-\i|}
\leq&\sum_{I_n(n_{j})\geq n_{j}^{\alpha}}\frac{n_{j}^{\epsilon-\alpha}}{|n_{j}-\i|}\\
\leq&\sum_{j=0}^{\infty}\frac{n_{j}^{\epsilon-\alpha}}{|n_{j}-\i|}=O(1).
\end{split}
\end{equation}
Finally note that $|\mu_{n}(w)-\i|\ll T_n^2$.
\end{proof}

Recall
\begin{equation}
\begin{split}
\partial B(T_n)=&\frac{1}{\pi\i}\left[h(r)\log\left(1-\frac{\scrD(r)}{\log(\i r)-2\pi c(\varphi)}\right)\right]_{-\i\sigma-T_n}^{-\i\sigma+T_n}\\
&+\frac{1}{\pi\i}\int_{-\i\sigma+T_n}^{\i\sigma+T_n}h(\rho)\frac{S'_{\varphi}}{S_\varphi}(\rho)d\rho.
\end{split}
\end{equation}
We know that $\lim_{n\to\infty} B(T_n)$ exists (for any test function $h$ which satisfies the uniform bound $|h(\rho)|\ll(1+|\Re\rho|)^{-2-\delta}$ in the strip $|\Im\rho|\leq\sigma$ -- the decay which is required by Weyl's law to ensure that the trace converges absolutely) and we want to prove that the limit is zero for any test function which satisfies the uniform bound
\begin{equation}
|h(\rho)|\ll(1+|\Re\rho|)^{-5-\delta}
\end{equation}
in the strip $|\Im\rho|\leq\sigma$.

For the first term we have, in view of $|\scrD(\rho)|\leq f(\sigma)$ along $\Im\rho=-\sigma$,
\begin{equation}
\left|\log\left(1-\frac{\scrD(-\i\sigma\pm T_n)}{\log(\sigma\pm\i T_n)-2\pi c(\varphi)}\right)\right|\ll \frac{1}{\log T_n}
\end{equation}
which implies that this term vanishes as $n\to\infty$.

For the integral an integration by parts gives
\begin{equation}
\begin{split}
\int_{-\i\sigma+T_n}^{\i\sigma+T_n}h(\rho)\frac{S'_{\varphi}}{S_\varphi}(\rho)d\rho
=&\left[h(\rho)\log S_\varphi(\rho)\right]_{-\i\sigma+T_n}^{\i\sigma+T_n}\\
&-\int_{-\i\sigma+T_n}^{\i\sigma+T_n}h'(\rho)\log S_\varphi(\rho)d\rho.
\end{split}
\end{equation}
To see that the first term vanishes as $n\to\infty$, observe that the identity \eqref{specid} and the bound \eqref{o1bound} imply
\begin{equation}
\begin{split}
|\log S(\pm\i\sigma+T_n)|=\;&|\log S(-\i\sigma \mp T_n)| \\
\leq\; &|\log|S(\pm\i\sigma+T_{n})||+|\arg S(\pm\i\sigma+T_n)| \\
=\; &\log\log T_n+O(1),
\end{split}
\end{equation}
where we used $|\arg S(\pm\i\sigma+T_n)|\ll1$ as $n\to\infty$. Similarly we see $$\int_{-\i\sigma+T_n}^{\i\sigma+T_n}h'(\rho)\log S_\varphi(\rho)d\rho=\int_{-\i\sigma+T_n}^{\i\sigma+T_n}h'(\rho)\log |S_\varphi(\rho)|d\rho+O(T_n^{-5}).$$
We have the calculation
\begin{equation}
\begin{split}
&\int_{T_n}^{\i\sigma+T_n}h'(\rho)\log |S_\varphi(\rho)|d\rho\\
\stackrel{\rho\to-\rho}{=}&-\int_{-T_n}^{-\i\sigma-T_n}h'(-\rho)\log |S_\varphi(-\rho)|d\rho\\
\stackrel{\rho\to-\bar{\rho}}{=}&-\int_{T_n}^{-\i\sigma+T_n}h'(\bar{\rho})\log |S_\varphi(\bar{\rho})|d\rho\\
=&\int_{-\i\sigma+T_n}^{T_n}h'(\bar{\rho})\log |S_\varphi(\rho)|d\rho
\end{split}
\end{equation}
where we used $S_\varphi(\bar{\rho})=\overline{S_\varphi(\rho)}$, and so the term
\begin{equation}\label{btermidentity}
\int_{-\i\sigma+T_n}^{\i\sigma+T_n}h'(\rho)\log |S_\varphi(\rho)|d\rho=\int_{-\i\sigma+T_n}^{T_n}\lbrace h'(\rho)+h'(\bar{\rho})\rbrace\log |S_\varphi(\rho)|d\rho
\end{equation}
converges to a limit as $n\to\infty$.

To obtain the result we require two lemmas. The first lemma constructs an even test function which is analytic in a strip and the real part of whose derivative satisfies a certain polynomial lower bound in $T_n$ on the line segment $[-\i\sigma+T_n,T_n]$.
\begin{lem}\label{testfunction}
Choose $\sigma_0>\sigma$. Let $$h_5(\rho)=\frac{-1}{(\rho^2+\sigma_0^2)^2}.$$ We have for $t\in[-\sigma_0,0]$ and for all sufficiently large $n$
\begin{equation}
\Re h'_5(T_n+\i t)=|\Re h'_5(T_n+\i t)|\gg \frac{1}{T_n^5}
\end{equation}
as $n\to\infty$.
\end{lem}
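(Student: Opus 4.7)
\medskip\noindent\textbf{Plan.} Since $h_5$ is a completely explicit rational function, the lemma will follow from a direct computation of $h_5'$ together with an estimation of its modulus and argument on the segment $\rho = T_n+\i t$, $t\in[-\sigma_0,0]$. The approach is to show that, along this segment, $h_5'$ lies in a narrow cone around the positive real axis, so that its real part is positive and essentially of the same size as its modulus.

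Concretely, I would first differentiate to get $h_5'(\rho)=4\rho/(\rho^2+\sigma_0^2)^3$ and parametrize $\rho=T_n+\i t$, expanding
\begin{equation*}
\rho^2+\sigma_0^2 \;=\; (T_n^2+\sigma_0^2-t^2)+2\i T_n t.
\end{equation*}
Because $|t|\le\sigma_0$ is bounded by a fixed constant, one has $\Re(\rho^2+\sigma_0^2)\ge T_n^2$ while $|\Im(\rho^2+\sigma_0^2)|\le 2\sigma_0 T_n$, so for large $T_n$ the real part dominates. This yields $|\rho^2+\sigma_0^2|\asymp T_n^2$ with $\arg(\rho^2+\sigma_0^2)=O(1/T_n)$, and similarly $|4\rho|\asymp T_n$ with $\arg(4\rho)=O(1/T_n)$.

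Combining these estimates gives $|h_5'(T_n+\i t)|\asymp 1/T_n^5$ and
\begin{equation*}
\arg h_5'(T_n+\i t)=\arg(4\rho)-3\arg(\rho^2+\sigma_0^2)=O(1/T_n),
\end{equation*}
uniformly in $t\in[-\sigma_0,0]$. Therefore, for all sufficiently large $n$, this argument lies in $[-\pi/3,\pi/3]$, so that $\cos(\arg h_5'(T_n+\i t))\ge \tfrac12$; in particular $\Re h_5'(T_n+\i t)>0$, the identity $\Re h_5'(T_n+\i t)=|\Re h_5'(T_n+\i t)|$ holds, and
\begin{equation*}
\Re h_5'(T_n+\i t)\;\ge\;\tfrac12\,|h_5'(T_n+\i t)|\;\gg\;\frac{1}{T_n^5}.
\end{equation*}

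The computation is short and there is no serious obstacle: the only point worth recording is that $\sigma_0$ is a constant fixed independently of $n$, which is what makes all the error terms uniform in $t\in[-\sigma_0,0]$. The specific exponent $2$ in the denominator of $h_5$ is calibrated so that the decay $|h_5'(\rho)|\asymp |\rho|^{-5}$ pairs off against the polynomial growth bound $|S_\varphi(T_n+\i w)|\ll T_n^{4+\epsilon}$ from Lemma \ref{polybound}, with just enough room to spare to control $\log|S_\varphi|$ on average in the next step.
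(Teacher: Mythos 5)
Your proof is correct and follows essentially the same route as the paper: both differentiate to get $h_5'(\rho)=4\rho/(\rho^2+\sigma_0^2)^3$, substitute $\rho=T_n+\i t$ with $|t|\le\sigma_0$ fixed, and observe that the expression is real-part-dominant of size $\asymp T_n^{-5}$. The paper just writes the real part out explicitly and asserts the bound, whereas you organize the same estimate via $|h_5'|\asymp T_n^{-5}$ and $\arg h_5'=O(1/T_n)$; this is a matter of presentation, not a different argument.
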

\begin{proof}
We have $$h_5'(\rho)=\frac{4\rho}{(\rho^2+\sigma_0^2)^3}.$$
Let $t\in[-\sigma_0,0]$. A simple calculation gives
\begin{equation}
\begin{split}
|\Re h_5'(T_n+\i t)|=&\left|\Re\left\{\frac{4(T_n+\i t)(T_n^2-t^2+\sigma_0^2-2\i T_n t)^3}{((T_n^2-t^2+\sigma_0^2)^2+4T_n^2t^2)^3}\right\}\right|\\
=&\Re\left\{\frac{4(T_n+\i t)(T_n^2-t^2+\sigma_0^2-2\i T_n t)^3}{((T_n^2-t^2+\sigma_0^2)^2+4T_n^2t^2)^3}\right\}\gg \frac{1}{T_n^5}
\end{split}
\end{equation}
as $n\to\infty$.
\end{proof}

The second lemma gives a bound on $\log|S_\varphi|$ averaged along the line segment $[-\i\sigma+T_n,T_n]$.
\begin{lem}\label{avbound}
We have the following bound
\begin{equation}
\left|\int_{-\i\sigma+T_{n}}^{T_{n}}\log |S_\varphi(\rho)|d\rho\right|\ll T_{n}^5.
\end{equation}

\end{lem}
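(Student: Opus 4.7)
The plan is to apply identity \eqref{btermidentity} to the specific test function $h=h_5$ from Lemma \ref{testfunction} and use the positivity and polynomial lower bound of $\Re h_5'$ as a weight to extract an $L^1$-type bound on $\log|S_\varphi|$ along the segment $[-\i\sigma+T_n,T_n]$.

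First, $|h_5(\rho)|\ll(1+|\Re\rho|)^{-4}$ decays faster than the Weyl threshold $(1+|\Re\rho|)^{-2-\delta}$, so $\lim_{n\to\infty}\partial B(T_n)$ exists. The analysis preceding \eqref{btermidentity}, now applied to $h_5$, shows that both boundary terms of $\partial B(T_n)$ vanish as $n\to\infty$ (the first by the $\log T_n$ denominator; the second by combining $|\log S_\varphi(\pm\i\sigma+T_n)|=\log\log T_n+O(1)$ with $|h_5(\pm\i\sigma+T_n)|\ll T_n^{-4}$), and that replacing $\log S_\varphi$ by $\log|S_\varphi|$ introduces an error of order $O(T_n^{-5})$ since $|h_5'|\ll T_n^{-5}$. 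Consequently
\[
\int_{-\i\sigma+T_n}^{\i\sigma+T_n}h_5'(\rho)\log|S_\varphi(\rho)|\,d\rho = O(1),
\]
and by \eqref{btermidentity} the same holds for the half-segment form
\[
\int_{-\i\sigma+T_n}^{T_n}\{h_5'(\rho)+h_5'(\bar\rho)\}\log|S_\varphi(\rho)|\,d\rho = O(1).
\]
Parameterizing $\rho=T_n+\i t$ for $t\in[-\sigma,0]$ and noting that $h_5$ is even with real Taylor coefficients, so $h_5'(\bar\rho)=\overline{h_5'(\rho)}$, this forces
\[
\int_{-\sigma}^{0}\Re h_5'(T_n+\i t)\,\log|S_\varphi(T_n+\i t)|\,dt = O(1).
\]

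The crux is then a positive/negative part decomposition. By Lemma \ref{polybound}, $[\log|S_\varphi(T_n+\i t)|]_+\leq(4+\epsilon)\log T_n+O(1)$, while the fundamental theorem of calculus gives $\int_{-\sigma}^{0}\Re h_5'(T_n+\i t)\,dt=O(T_n^{-4})$ (both endpoint values of $h_5$ are $O(T_n^{-4})$). Hence the positive-part contribution is $O(T_n^{-4}\log T_n)=o(1)$, and the remaining negative part satisfies
\[
\int_{-\sigma}^{0}\Re h_5'(T_n+\i t)\,[\log|S_\varphi(T_n+\i t)|]_-\,dt = O(1).
\]
Dividing through by the strictly positive lower bound $\Re h_5'(T_n+\i t)\gg T_n^{-5}$ from Lemma \ref{testfunction}, we obtain $\int_{-\sigma}^{0}[\log|S_\varphi(T_n+\i t)|]_-\,dt\ll T_n^5$. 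Adding the $O(\log T_n)$ contribution of the positive part and reverting the parameterization then yields the claim.

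The main obstacle is that $S_\varphi$ can be arbitrarily small on the segment (it has real zeros $\chi_j$ interlaced with the $\rho_j$), so $\log|S_\varphi|$ admits no pointwise lower bound. The whole point of constructing $h_5$ in Lemma \ref{testfunction} is precisely to provide a weight $\Re h_5'$ that is strictly positive and of controlled polynomial size: the bounded weighted integral of $\log|S_\varphi|$ against $\Re h_5'$ then converts into a one-sided $L^1$ estimate on $[\log|S_\varphi|]_-$, which combined with the cheap upper bound from Lemma \ref{polybound} produces the desired $L^1$ bound on $|\log|S_\varphi||$.
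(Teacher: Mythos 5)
Your proof is correct and follows essentially the same strategy as the paper: apply the trace-formula remainder analysis to $h_5$ to get that $\int_{-\sigma}^0 \Re h_5'(T_n+\i t)\log|S_\varphi(T_n+\i t)|\,dt=O(1)$, then combine the polynomial upper bound on $|S_\varphi|$ from Lemma~\ref{polybound} with the lower bound $\Re h_5'\gg T_n^{-5}$ from Lemma~\ref{testfunction}. The only difference is cosmetic: you make the sign argument explicit via a positive/negative part split, while the paper achieves the same effect by pre-dividing by $cT_n^5$ so that the logarithm has a definite sign.
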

\begin{proof}
We know there exists a constant $c>0$ such that for all $n$ and $w\in[-\sigma,0]$ we have $$|S_\varphi(T_n+\i w)|<cT_n^5.$$ In Lemma \ref{testfunction} we prove the existence of a test function $h_5$ which is analytic in the strip $|\Im\rho|\leq\sigma$, satisfies the uniform bound $|h_5(\rho)|\ll(1+|\Re\rho|)^{-4}$ in this strip and in addition $h_5(\bar{\rho})=\overline{h_5(\rho)}$ and $\Re h'_5 (T_{n}+\i w)=|\Re h'_5 (T_{n}+\i w)|\gg T_{n(l)}^{-5}$. We thus have
\begin{equation}
\begin{split}
&T_{n}^{-5}\left|\int_{-\i\sigma+T_{n}}^{T_{n}}\log |S_\varphi(\rho)|d\rho\right|\\
\leq \;&T_{n}^{-5}\int_{-\sigma}^0 -\log(c^{-1}T_{n}^{-5}|S_\varphi(T_{n}+\i w)|)dw+O(T_{n}^{-5}\log T_{n})\\
\ll \;& -\int_{-\sigma}^0\Re h'_5 (T_{n}+\i w)\log(c^{-1}T_{n}^{-5}|S_\varphi(T_{n}+\i w)|)dw\\
&+O(T_{n}^{-5}\log T_{n})\\
\ll_\epsilon \;&1
\end{split}
\end{equation}
because $|h_5(\rho)|\ll(1+|\Re\rho|)^{-4}$ uniformly in $|\Im\rho|\leq\sigma$ and therefore $$\lim_{n\to\infty}\int_{-\i\sigma+T_{n}}^{T_{n}}\Re h'_5(\rho)\log |S_\varphi(\rho)|d\rho$$ exists.
\end{proof}
We obtain $$\left|\int_{-\i\sigma+T_{n}}^{\i\sigma+T_{n}}h'(\rho)\log |S_\varphi(\rho)|d\rho\right|\ll T_{n}^{-\delta}$$ in view of the identity \eqref{btermidentity}.
We also used that by Cauchy's theorem the analyticity and decay of $h$ in $|\Im\rho|\leq\sigma_0$, where $\sigma_0>\sigma$, imply the analyticity of $h'$ in $|\Im\rho|\leq\sigma$ and the uniform decay $$|h'(\rho)|\ll(1+|\Re\rho|)^{-5-\delta}$$ in the same strip. It follows that $$\lim_{n\to\infty} \partial B(T_{n})=0$$ which proves Proposition~\ref{vanishing}.

\section{Proof of Proposition~\ref{main bound}}\label{sec:using trace}

\subsection{}
We want to apply the trace formula in order to obtain information about the average spacing between new eigenvalues and old eigenvalues. Let $h(\rho)=e^{-\beta\rho^2}$, for small $\beta>0$. Upon dividing through by $\beta$ we can rewrite the l.h.s. of the trace formula \eqref{pretrace} as
\begin{equation}\label{approx}
\begin{split}
\frac{1}{\beta}\sum_{j=0}^\infty\lbrace e^{-\beta\lambda_j^\varphi}-e^{-\beta n_j}\rbrace
=&\sum_{j=0}^\infty e^{-\beta\lambda_j^\varphi}\frac{1-e^{-\beta(n_j-\lambda_j^\varphi)}}{\beta}\\
=&\sum_{j=0}^\infty d_j e^{-\beta\lambda_j^\varphi}+O(\beta^{-1/2})
\end{split}
\end{equation}
where $d_j=n_j-\lambda_j^\varphi>0$. The last line follows from the following lemma.
\begin{lem}\label{approxlemma}
We have the bound
\begin{equation}
\sum_{j=0}^\infty d_j e^{-\beta\lambda_j^\varphi}\left(1-\frac{1-e^{-\beta d_j}}{\beta d_j}\right)
\ll\beta^{-1/2}
\end{equation}
\end{lem}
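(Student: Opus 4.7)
The plan rests on an elementary inequality, together with a trivial gap bound for the norms and Abel summation. The starting point is the Taylor-type bound
$$0 \;\le\; 1 - \frac{1-e^{-x}}{x} \;\le\; \frac{x}{2} \qquad (x \ge 0),$$
the lower bound coming from $1-e^{-x}\le x$ and the upper from $1-e^{-x}\ge x - x^2/2$. Applying this with $x = \beta d_j$ gives
$$d_j e^{-\beta\lambda_j^\varphi}\!\left(1 - \frac{1-e^{-\beta d_j}}{\beta d_j}\right) \;\le\; \frac{\beta}{2}\, d_j^{\,2}\, e^{-\beta \lambda_j^\varphi},$$
so the lemma is reduced to the estimate $\sum_{j\ge 0} d_j^{\,2}\, e^{-\beta\lambda_j^\varphi} \ll \beta^{-3/2}$.

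Next I would exploit the interlacing $n_{j-1} < \lambda_j^\varphi < n_j$ for $j \ge 1$, which gives $d_j \le n_j - n_{j-1}$ and $e^{-\beta\lambda_j^\varphi} \le e^{-\beta n_{j-1}}$. (The $j = 0$ summand is $O(1)$ because $\lambda_0^\varphi$ is a fixed negative constant.) After re-indexing this reduces the task to showing
$$\sum_{j \ge 0} (n_{j+1} - n_j)^2\, e^{-\beta n_j} \;\ll\; \beta^{-3/2}.$$

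For the gaps I would use the trivial bound $n_{j+1} - n_j \ll \sqrt{n_j}$, which is available because the dual lattice $\scrL$ contains all points of the form $(am,0)$, so $\scrN$ contains the subset $\{a^2m^2 : m \ge 0\}$; consecutive values of $a^2 m^2$ differ by $a^2(2m+1)\ll\sqrt{n_j}$, and this dominates the actual gap between any two consecutive elements of $\scrN$ at that height. Setting $F(x) := \sum_{n_j \le x} (n_{j+1} - n_j)^2$, this gap bound yields
$$F(x) \;\le\; \Bigl(\max_{n_j \le x}(n_{j+1} - n_j)\Bigr)\!\sum_{n_j \le x}(n_{j+1} - n_j) \;\ll\; \sqrt{x}\cdot x \;=\; x^{3/2}.$$
Abel summation then gives
$$\sum_j (n_{j+1}-n_j)^2\, e^{-\beta n_j} \;=\; \beta \int_0^\infty e^{-\beta y}\, F(y)\, dy \;\ll\; \beta\int_0^\infty y^{3/2}\, e^{-\beta y}\, dy \;\ll\; \beta^{-3/2},$$
which is exactly what is needed.

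I do not anticipate any serious obstacle: the only non-formal step is the gap bound $\delta_j \ll \sqrt{n_j}$, and as indicated it is almost immediate from the presence of the one-dimensional subsequence $\{a^2 m^2\}$ inside $\scrN$. Everything else is the elementary two-term Taylor estimate, the interlacing relation $n_{j-1} < \lambda_j^\varphi < n_j$, and a single Abel summation.
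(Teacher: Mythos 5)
Your proof is correct, but takes a genuinely different route from the paper's. The paper applies the elementary inequality $0<1-\tfrac{1-e^{-x}}{x}<x$ in exactly the same way, but then invokes the \emph{individual} bound $d_j\ll n_j^{1/4}$ (the ``greedy algorithm'' estimate from \cite{RU}), which gives $d_j^2\ll n_j^{1/2}$ pointwise; combined with the shift $\lambda_j^\varphi>n_{j-1}$ and the counting bound $N(x)\ll x$ this reduces to $\beta\int_0^\infty x^{1/2}e^{-\beta x}\,dx\ll\beta^{-1/2}$. You instead use only the crude pointwise bound $d_j\le n_j-n_{j-1}$ coming from interlacing, and absorb the loss by estimating the \emph{second moment} of the gaps: the one-dimensional sublattice $\{a^2m^2\}\subset\scrN$ forces $n_{j+1}-n_j\ll\sqrt{n_j}$, which via the factorization $F(x)\le\bigl(\max\delta_j\bigr)\sum\delta_j\ll x^{3/2}$ and Abel summation yields the same $\beta^{-3/2}$ for $\sum d_j^2e^{-\beta\lambda_j^\varphi}$. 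What your argument buys is self-containment — you avoid citing the $n_j^{1/4}$ bound — at the cost of working with weaker pointwise information and a slightly more involved summation-by-parts. What the paper's approach buys is the stronger individual control $d_j\ll n_j^{1/4}$, which is used elsewhere in their program and makes this particular step a one-liner. Both land on $\beta^{-1/2}$, and every step of your version (the Taylor bound $1-e^{-x}\ge x-x^2/2$, the interlacing $n_{j-1}<\lambda_j^\varphi<n_j$, the gap bound via the squares, the telescoping estimate $\sum_{n_j\le x}(n_{j+1}-n_j)\ll x$, and the boundary terms in the Abel summation) checks out.
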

\begin{proof}

For $x>0$ we have the inequality $$0<1-\frac{1-e^{-x}}{x}<x$$ and the bound $d_j\ll n_j^{1/4}$ for $j\geq1$ (cf. the greedy algorithm in \cite{RU}, p. 7). It follows
\begin{equation}
\begin{split}
\sum_{j=0}^\infty d_j e^{-\beta\lambda_j^\varphi}\left(1-\frac{1-e^{-\beta d_j}}{\beta d_j}\right)
<&\beta\sum_{j=0}^\infty d_j^2 e^{-\beta \lambda_j^\varphi}\\
\ll& \beta\sum_{j=1}^\infty n_j^{1/2}e^{-\beta\lambda_j^\varphi}+\beta e^{-\beta\lambda_0^\varphi}\\
<&\beta\sum_{j=0}^\infty n_{j+1}^{1/2}e^{-\beta n_j}+\beta e^{-\beta\lambda_0^\varphi}
\end{split}
\end{equation}
and the bound $N_\varphi(x)\ll x$ permits us to bound the sum by the following integral:
\begin{equation}
\beta\sum_{j=0}^\infty n_{j+1}^{1/2}e^{-\beta n_j}\ll \beta\int_{0}^\infty x^{1/2}e^{-\beta x}dx
\ll\beta^{-1/2}.
\end{equation}
\end{proof}

\subsection{The smooth term}
We have the following bound on the smooth term.
\begin{prop}\label{smoothbound}
As $\beta\searrow0$
\begin{equation}
\frac{1}{2\pi}\left|\int_{-\i\sigma-\infty}^{-\i\sigma+\infty}\frac{e^{-\beta\rho^2}d\rho}{\rho(\log(\i\rho)-2\pi
c(\varphi))}\right| \ll \frac{1}{\log\frac 1{\beta}}.
\end{equation}
\end{prop}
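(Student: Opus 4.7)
The plan is to deform the contour upward past a residue and evaluate the resulting integral asymptotically. In the strip $-\sigma<\Im\rho<0$, the integrand of
\[
I(\beta):=\int_{-\i\sigma-\infty}^{-\i\sigma+\infty}\frac{e^{-\beta\rho^2}\,d\rho}{\rho(\log(\i\rho)-2\pi c(\varphi))}
\]
has a single simple pole at $\rho^{*}=-\i\,e^{2\pi c(\varphi)}$, arising from the zero of $\log(\i\rho)-2\pi c(\varphi)$; this point lies in the strip because the condition $\log\sigma>2\pi|c(\varphi)|$ forces $e^{2\pi c(\varphi)}<\sigma$. A routine computation gives the residue there as $e^{-\beta(\rho^{*})^{2}}=e^{\beta e^{4\pi c(\varphi)}}=1+O(\beta)$. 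Shifting the contour from $\Im\rho=-\sigma$ up to $\Im\rho=-\delta$ for any $0<\delta<e^{2\pi c(\varphi)}$ (Gaussian decay kills the vertical ends) yields
\[
I(\beta)=2\pi\i\,e^{\beta e^{4\pi c(\varphi)}}+J(\beta),
\]
where $J(\beta)$ is the integral on $\Im\rho=-\delta$. Since $J(\beta)$ is independent of the admissible $\delta$, I may take $\delta\to 0^{+}$ and represent it as an integral on the real line: near $\rho=0$ the integrand is $O(1/(\rho\log\rho))$ and hence integrable.

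On the real line, using the principal branch identity $\log(\i t)=\log|t|+\i(\pi/2)\operatorname{sgn}(t)$, the real part of the integrand is odd in $t$ (integrating to zero), while the imaginary part is even. After the substitution $u=\log t-2\pi c(\varphi)$ on $t>0$ (so $dt/t=du$) and doubling, I obtain the explicit identity
\[
J(\beta)=-\i\pi\int_{-\infty}^{\infty}\frac{e^{-\beta\,e^{2u+4\pi c(\varphi)}}\,du}{u^{2}+\pi^{2}/4}.
\]
As $\beta\searrow 0$ the Gaussian factor tends to $1$ pointwise and is bounded by $1$, so dominated convergence gives $J(\beta)\to-\i\pi\int_{\R}du/(u^{2}+\pi^{2}/4)=-2\pi\i$. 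For the quantitative rate, split the $u$-integral at $u_{0}=\tfrac12\log(1/\beta)-C$ for a large fixed constant $C$: for $u\leq u_{0}$, the bound $|e^{-x}-1|\leq x$ together with $\beta\,e^{2u+4\pi c(\varphi)}\leq e^{-2C+4\pi c(\varphi)}$ gives an error $O(1/\log^{2}(1/\beta))$, while for $u>u_{0}$ the trivial bound yields $\int_{u_{0}}^{\infty}du/(u^{2}+\pi^{2}/4)=O(1/\log(1/\beta))$. Hence $J(\beta)=-2\pi\i+O(1/\log(1/\beta))$, and substituting back,
\[
I(\beta)=2\pi\i(1+O(\beta))+\bigl(-2\pi\i+O(1/\log(1/\beta))\bigr)=O(1/\log(1/\beta)).
\]

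The main obstacle is recognizing that the sharp $1/\log(1/\beta)$ decay emerges only from the cancellation between the residue contribution $+2\pi\i$ and the limit $-2\pi\i$ of the real-line integral; pointwise absolute-value bounds on the original integrand give only $O(1)$. So the argument relies essentially on this exact cancellation, combined with the quantitative error estimate for the dominated convergence step which produces the $1/\log(1/\beta)$ rate.
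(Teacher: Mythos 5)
Your proposal is correct and follows essentially the same route as the paper: shift the contour past the simple pole at $\rho^*=-\i e^{2\pi c(\varphi)}$, reduce the real-line remainder to a Cauchy-kernel integral via the odd/even decomposition and the substitution $u=\log t-2\pi c(\varphi)$, and exploit the cancellation between the residue $2\pi\i$ and the $\beta\to 0$ limit $-2\pi\i$ of that kernel integral. The paper's split of the $u$-integral is at $\tfrac{1-\epsilon}{2}|\log\gamma|$ with $\gamma=e^{4\pi c(\varphi)}\beta$ rather than your $\tfrac12\log(1/\beta)-C$, but the resulting $1/\log(1/\beta)$ rate is the same.
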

\begin{proof}
Denote by $C_\delta$ the contour following a semicircle in the lower halfplane centered at the origin of radius $\delta$, where $e^{2\pi c(\varphi)}>\delta>0$, starting from $-\delta$ and finishing at $\delta$. By shifting the contour across the pole at $\rho=-\i e^{2\pi c(\varphi)}$ to the real line we obtain
\begin{equation}\label{id1}
\begin{split}
&\frac{1}{2\pi\i}\int_{-\i\sigma-\infty}^{-\i\sigma+\infty}\frac{e^{-\beta\rho^2}d\rho}{\rho\log(\i\rho e^{-2\pi c(\varphi)})}\\
=&e^{\beta e^{4\pi c(\varphi)}}+\frac{1}{2\pi\i}\left\{\int_{C_\delta}+\int_{\R\setminus(-\delta,\delta)}\right\}\frac{e^{-\beta\rho^2}d\rho}{\rho\log(\i\rho e^{-2\pi c(\varphi)})}.
\end{split}
\end{equation}
Note that the integral over the semicircle vanishes as $\delta\to0$.

We may pick the branch of the complex logarithm in such a way that $\arg(x)=\pi/2$ if $x<0$ and $\arg(x)=3\pi/2$ if $x>0$.  Then for real $\rho\neq0$
\begin{equation}
\begin{split}
&\frac{1}{\log(\i\rho e^{-2\pi c(\varphi)})}=\frac{1}{\log(|\rho|e^{-2\pi c(\varphi)})+\i(\tfrac{\pi}{2}+\arg(\rho e^{-2\pi c(\varphi)}))}\\
=&\frac{\log(|\rho|e^{-2\pi c(\varphi)})-\i(\tfrac{\pi}{2}+\arg(\rho e^{-2\pi c(\varphi)}))}{\log^2(|\rho| e^{-2\pi c(\varphi)})+\pi^2/4}
\end{split}
\end{equation}
and it follows that
\begin{equation}
\begin{split}
&\frac{1}{2\pi\i}\int_{\R\setminus(-\delta,\delta)}\frac{e^{-\beta\rho^2}d\rho}{\rho\log(\i\rho e^{-2\pi c(\varphi)})}\\
=&-\frac{1}{2\pi}\int_{\R\setminus(-\delta,\delta)}e^{-\beta\rho^2}\frac{\tfrac{\pi}{2}+\arg(\rho e^{-2\pi c(\varphi)})}{\rho(\log^2(|\rho|e^{-2\pi c(\varphi)})+\frac{\pi^2}{4})}d\rho\\
=&-\frac{1}{2}\int_\delta^\infty \frac{e^{-\beta\rho^2} d\rho}{\rho(\log^2(\rho e^{-2\pi c(\varphi)})+\tfrac{\pi^2}{4})}\\
=&-\frac{1}{2}\int_{e^{-2\pi c(\varphi)}\delta}^\infty \frac{e^{-\beta e^{4\pi c(\varphi)} r^2} dr}{r(\log^2 r+\tfrac{\pi^2}{4})}\\
=&-\frac{1}{2}\int_{-2\pi c(\varphi)+\log\delta}^\infty \frac{e^{-\beta e^{4\pi c(\varphi)} e^{2t}} dt}{t^2+\frac{\pi^2}{4}}\\
\to&-\frac{1}{2}\int_{-\infty}^\infty \frac{e^{-\beta e^{4\pi c(\varphi)}e^{2t}} dt}{t^2+\frac{\pi^2}{4}}\qquad\text{as $\delta\to0$.}
\end{split}
\end{equation}
Since $$\int_{-\infty}^\infty \frac{dt}{t^2+\frac{\pi^2}{4}}=2,$$ we obtain in view of \eqref{id1}
\begin{equation}
\begin{split}
&\frac{1}{2\pi\i}\int_{-\i\sigma-\infty}^{-\i\sigma+\infty}\frac{e^{-\beta\rho^2}d\rho}{\rho\log(\i\rho e^{-2\pi c(\varphi)})}\\
=&e^{\beta e^{4\pi c(\varphi)}}-1+\frac{1}{2}\int_{-\infty}^\infty\frac{1-e^{-\beta e^{4\pi c(\varphi)} e^{2t}}dt}{t^2+\tfrac{\pi^2}{4}}.
\end{split}
\end{equation}
Let $\gamma=e^{4\pi c(\varphi)}\beta$. We proceed by dividing the integral on the r.h.s. into two integrals over the ranges $(-\infty,\tfrac{1-\epsilon}{2}|\log\gamma|)$ and $[\tfrac{1-\epsilon}{2}|\log\gamma|,\infty)$ for some small $\epsilon>0$.

We then bound the first integral as follows
\begin{equation}
\int_{-\infty}^{\tfrac{1-\epsilon}{2}|\log\gamma|}\frac{1-e^{-\gamma e^{2t}}dt}{t^2+\tfrac{\pi^2}{4}}
\ll \gamma\int_{-\infty}^{\tfrac{1-\epsilon}{2}|\log\gamma|}e^{2t}dt=\tfrac{1}{2}\gamma^{\epsilon}
\end{equation}
where we note that $$|1-e^{-\gamma e^{2t}}|\ll \gamma e^{2t},$$ because $t<\tfrac{1-\epsilon}{2}|\log\gamma|$ implies $\gamma e^{2t}<\gamma^\epsilon$.

We bound the second integral by
\begin{equation}
\frac{1}{2}\int_{\tfrac{1-\epsilon}{2}|\log\gamma|}^{\infty}\frac{|1-e^{-\gamma e^{2t}}|dt}{t^2+\tfrac{\pi^2}{4}}
<\int_{\tfrac{1-\epsilon}{2}|\log\gamma|}^{\infty}t^{-2}dt=\frac{2}{(1-\epsilon)|\log\gamma|}.
\end{equation}
\end{proof}

\subsection{The diffractive term}
We continue with the bound on the diffractive term.
\begin{prop}\label{diffbound}
Let $h(\rho)=e^{-\beta\rho^2}$. As $\beta\searrow0$
\begin{equation}
\left|\int_{-\i\sigma-\infty}^{-\i\sigma+\infty}h'(\rho)\log\left(1-\frac{\scrD(\rho)}{\log\i\rho}\right)d\rho\right|
\ll \frac{1}{\log\tfrac{1}{\beta}}.
\end{equation}
\end{prop}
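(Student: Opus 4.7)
The plan is to parametrize the contour $\Im\rho=-\sigma$ by $\rho=t-\i\sigma$, $t\in\R$, and estimate the integrand pointwise. On this line
$$|h'(\rho)|=2\beta|\rho|\,e^{-\beta\Re\rho^{2}}\ll\beta\sqrt{t^{2}+\sigma^{2}}\,e^{-\beta t^{2}},$$
absorbing the harmless constant $e^{\beta\sigma^{2}}$ as $\beta\to 0$. From \eqref{o1bound} one has $|\scrD(\rho)|\leq f(\sigma)=O(1)$ uniformly in $t$, and from the explicit form of the complex logarithm
$$|\log\i\rho-2\pi c(\varphi)|\geq\tfrac{1}{2}\log(t^{2}+\sigma^{2})-2\pi|c(\varphi)|\gg\log(|t|+2)$$
for $\sigma$ sufficiently large. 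Since the ratio $\scrD(\rho)/(\log\i\rho-2\pi c(\varphi))$ is bounded in modulus strictly below $1$ by Lemma~3.2, applying $|\log(1-z)|\leq 2|z|$ once $|z|\leq 1/2$ and a crude bound otherwise yields the uniform estimate
$$\left|\log\left(1-\frac{\scrD(\rho)}{\log\i\rho-2\pi c(\varphi)}\right)\right|\ll\frac{1}{\log(|t|+2)}$$
along the whole contour.

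The problem is thereby reduced to bounding
$$\int_{-\infty}^{\infty}\beta\sqrt{t^{2}+\sigma^{2}}\,e^{-\beta t^{2}}\,\frac{dt}{\log(|t|+2)}.$$
Split this at $|t|=T_{0}$ for a large fixed $T_{0}$. The inner piece contributes $O(\beta)$, which is much smaller than $1/\log(1/\beta)$. For the outer piece, by symmetry restrict to $t>0$ and substitute $s=\beta t^{2}$ so that $\beta t\,dt=\tfrac{1}{2}ds$ and $\log t=\tfrac{1}{2}(\log s+\log(1/\beta))$; the integral becomes
$$\int_{\beta T_{0}^{2}}^{\infty}\frac{e^{-s}\,ds}{\log s+\log(1/\beta)}\cdot O(1).$$
For $s\geq 1$, the denominator is at least $\log(1/\beta)$ and the Gaussian gives an $O(1)$ mass, producing a contribution of $O(1/\log(1/\beta))$. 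For $\beta T_{0}^{2}\leq s\leq 1$, the further substitution $u=s/\beta$ transforms this piece into $\beta\int_{T_{0}^{2}}^{1/\beta}du/\log u$; the classical asymptotic $\int_{1}^{X}du/\log u\sim X/\log X$ bounds this by $O(1/\log(1/\beta))$ as well.

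The main obstacle is the moderate-$|t|$ regime (equivalently $s\leq 1$), where neither the Gaussian decay nor the logarithm separately produces a small factor: the correct rate emerges only through the interplay between $\beta t$ and $1/\log t$, via the logarithmic-integral asymptotics. A secondary technicality is the uniform lower bound on $|\log\i\rho-2\pi c(\varphi)|$ along the contour, but this is immediate from choosing $\sigma$ large enough to satisfy \eqref{cond} together with the explicit form of $\log(\sigma+\i t)$. Assembling these pieces gives the claimed bound $\ll 1/\log(1/\beta)$.
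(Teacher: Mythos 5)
Your argument is correct and follows the same basic strategy as the paper: bound $|h'|$, $|\scrD|$, and $|\log\i\rho - 2\pi c(\varphi)|$ pointwise along the shifted contour, so that the integrand is controlled by $\beta\sqrt{t^2+\sigma^2}\,e^{-\beta t^2}/\log(|t|+2)$, and the Gaussian then kills the polynomial factor while the logarithm supplies the stated rate. The difference lies in how the final one-dimensional integral is handled. The paper splits $\sigma+s$ into a $\sigma$-piece (which yields $O(\beta^{1/2})$ after the $\beta$ prefactor) and an $s$-piece, substitutes $w=s\sqrt\beta$, and cuts at $w=\gamma^\epsilon$, using the pointwise lower bound $\log(\xi^2+w^2/\gamma)\geq(1-2\epsilon)\log\tfrac1\gamma$ on the tail. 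You instead cut at a fixed $T_0$, substitute $s=\beta t^2$ (turning the denominator into $\tfrac12(\log s+\log\tfrac1\beta)$), cut again at $s=1$, and invoke the logarithmic-integral asymptotic $\int_1^X du/\log u\sim X/\log X$ on the intermediate range. Both routes yield the same estimate; yours is arguably cleaner in that it makes explicit that the rate $1/\log\tfrac1\beta$ is driven by the logarithmic integral. One small point of care: to obtain $|\log(1-z)|\ll|z|$ uniformly on the contour you need $|z|$ bounded away from $1$, not merely $|z|<1$ as \eqref{cond} literally states; but this is automatic, since its proof gives $|z|\leq f(\sigma)/(\log\sigma-2\pi|c(\varphi)|)$, which tends to $0$ as $\sigma\to\infty$, so by enlarging $\sigma$ one may assume, say, $|z|\leq\tfrac12$ everywhere and the ``crude bound otherwise'' is never needed.
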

\begin{proof}
\eqref{cond} allows us to estimate
\begin{equation}
\begin{split}
&\left|\int_{-\infty}^{\infty}h'(-\i\sigma+s)\log\left(1-\frac{\scrD(-\i\sigma+s)}{\log\i(-\i\sigma+s)-2\pi c(\varphi)}\right)ds\right|\\
\ll &\int_{-\infty}^{\infty}|h'(-\i\sigma+s)|\frac{|\scrD(-\i\sigma+s)|}{|\log\i(-\i\sigma+s)-2\pi c(\varphi)|}ds.
\end{split}
\end{equation}
We have
\begin{equation}
|h'(-\i\sigma+s)|=2\beta|-\i\sigma+s| |e^{-\beta(-\i\sigma+s)^2}|
\leq 2\beta(\sigma+|s|)e^{\beta\sigma^2-\beta s^2 }
\end{equation}
and
\begin{equation}
|\scrD(-\i\sigma+s)|\leq\sum_{m\in\scrN} r_\scrL(m)\int_1^\infty \frac{e^{-\sigma\sqrt{m}w}dw}{\sqrt{w^2-1}}=f(\sigma)
\end{equation}
and finally
\begin{equation}
|\log(\sigma+\i s)|\geq\tfrac{1}{2}\log(\sigma^2+s^2).
\end{equation}
We continue our estimate as follows (recall $\sigma>\max\lbrace1,e^{2\pi c(\varphi)}\rbrace$)
\begin{equation}
\begin{split}
&\int_{-\infty}^{\infty}|h'(-\i\sigma+s)|\frac{|\scrD(-\i\sigma+s)|}{|\log\i(-\i\sigma+s)-2\pi c(\varphi)|}ds\\
\leq &\;8\beta e^{\beta\sigma^2}f(\sigma)\int_{0}^{\infty}\frac{(\sigma+s)e^{-\beta s^2}ds}{\log(\sigma^2+s^2)-4\pi c(\varphi)}
\end{split}
\end{equation}
and the integral is bounded by
\begin{equation}
\begin{split}
\int_{0}^{\infty}\frac{(\sigma+s)e^{-\beta s^2}ds}{\log(\sigma^2+s^2)-4\pi c(\varphi)}
<&\frac{\sigma}{(2\log\sigma-4\pi c(\varphi))\beta^{1/2}}\int_0^\infty e^{-w^2}dw\\
&+\int_{0}^{\infty}\frac{se^{-\beta s^2}ds}{\log(\sigma^2+s^2)-4\pi c(\varphi)}.
\end{split}
\end{equation}
Let $\gamma=e^{4\pi c(\varphi)}\beta$ and $\xi=e^{-2\pi c(\varphi)}\sigma$. We bound the second integral on the r.h.s. as follows
\begin{equation}
\begin{split}
\int_{0}^{\infty}\frac{se^{-\beta s^2}ds}{\log(\sigma^2+s^2)-4\pi c(\varphi)}
=&\frac{1}{\beta}\int_{0}^{\infty}\frac{we^{-w^2}ds}{\log(\sigma^2+\tfrac{w^2}{\beta})-4\pi c(\varphi)}\\
=&\frac{1}{\beta}\int_{0}^{\infty}\frac{we^{-w^2}ds}{\log(\xi^2+\tfrac{w^2}{\gamma})}\\
\ll&\frac{1}{\beta\log\xi^2}\int^{\gamma^\epsilon}_0 we^{-w^2}dw\\
&+\frac{1}{\beta(1-2\epsilon)\log\frac{1}{\gamma}}\int_{\gamma^\epsilon}^\infty we^{-w^2}dw
\end{split}
\end{equation}
for small $\epsilon>0$. The last line follows, since for $w\geq\gamma^\epsilon$ we have (assuming $\gamma<1$)
\begin{equation}
\begin{split}
1+\frac{\log(\gamma\sigma^2+w^2)}{\log\frac{1}{\gamma}}&\geq1+\frac{2\log w+\log(1+\gamma\sigma^2/w^2)}{\log(\frac{1}{\gamma})}\\
&\geq1-2\epsilon+O\left(\frac{\gamma^{1-2\epsilon}}{\log\frac{1}{\gamma}}\right).
\end{split}
\end{equation}
Since $$\int^{\gamma^\epsilon}_0 we^{-w^2}dw=O(\gamma^{2\epsilon})$$ we have $$\int_{-\infty}^{\infty}|h'(-\i\sigma+s)|\frac{|\scrD(-\i\sigma+s)|}{|\log\i(-\i\sigma+s)|}ds\ll\frac{1}{\log(\frac{1}{\gamma})}+O(\gamma^{2\epsilon}).$$
\end{proof}

\section{The three-dimensional case}\label{sec:3d}

The three-dimensional case is very different. Consider the three-dimensional flat torus $\T^3=\R^3/2\pi\scrL^3_0$, where $\scrL^3_0=\Z(1/ab,0,0)\oplus\Z(0,a,0)\oplus\Z(0,0,b)$ for some $a,b>0$. Denote by $\scrL^3$ the dual lattice of $\scrL^3_0$. The eigenvalues of the Laplacian on $\T^3$ are the norms of the vectors of the dual lattice $\scrL^3$. We denote the set of norms of the dual lattice vectors by $\scrN^3$ and the multiplicity of an eigenvalue $n\in\scrN^3$ is denoted by
\begin{equation}
r_3(n)=\#\lbrace\xi\in\scrL^3:|\xi|^2=n\rbrace.
\end{equation}

Let $\eta_j^\varphi$ be the perturbed eigenvalues of the point
scatterer on $\T^3$ and $\eta_j$ the unperturbed eigenvalues counted without
multiplicity (the norms).   The ordering is
\begin{equation}
\eta_0^\varphi<0=\eta_0<\eta_1^\varphi <\eta_1<\dots
<\eta_j^\varphi<\eta_j.
\end{equation}

Our main result is the following. Let $d_j:=\eta_j-\eta_j^\varphi>0$.
\begin{prop}\label{main3d}
We have as $\beta\searrow0$
\begin{equation}
\sum_{j=0}^{\infty}d_j e^{-\beta\eta_j^\varphi} = \frac{1}{2\beta}+O(\beta^{-3/4}).
\end{equation}
\end{prop}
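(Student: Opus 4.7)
The plan is to mimic the strategy of Section~\ref{sec:using trace}: first reduce the weighted sum $\sum d_j e^{-\beta\eta_j^\varphi}$ to the heat-trace difference $\beta^{-1}\sum\{e^{-\beta\eta_j^\varphi}-e^{-\beta\eta_j}\}$ by an approximation lemma, and then evaluate the latter via a three-dimensional analog of the trace formula of Theorem~\ref{TF} applied with $h(\rho)=e^{-\beta\rho^2}$. The approximation step is parallel to Lemma~\ref{approxlemma}: the elementary inequality $0<1-(1-e^{-x})/x<x$ together with a greedy-algorithm estimate $d_j\ll 1$ in $3$D (the bracketing $\eta_{j-1}^\varphi<\eta_{j-1}<\eta_j^\varphi<\eta_j$ forces $d_j$ to be at most the spacing of neighboring norms, which is bounded) and the Weyl bound $N_3(x)\ll x^{3/2}$ yield
\begin{equation*}
\sum_{j=0}^\infty d_j e^{-\beta\eta_j^\varphi}=\frac{1}{\beta}\sum_{j=0}^\infty\bigl\{e^{-\beta\eta_j^\varphi}-e^{-\beta\eta_j}\bigr\}+O(\beta^{-1/2}).
\end{equation*}

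For the trace formula, the key new input is that the $3$D free Green's function $g_\lambda(x,x_0)=e^{\i\rho|x-x_0|}/(4\pi|x-x_0|)$ ($\Im\rho>0$) admits the diagonal expansion $g_\lambda(x,x_0)=(4\pi|x-x_0|)^{-1}+\i\rho/(4\pi)+O(|x-x_0|)$. Summing by the method of images, subtracting the Coulomb singularity, and analytically continuing (using the evenness of $S_\varphi$) to the contour $\Im\rho=-\sigma$, one obtains a spectral function with leading behavior
\begin{equation*}
S_\varphi(\rho)=-\frac{\i\rho}{4\pi}+\mathcal{D}_3(\rho)+c_3(\varphi),\qquad \mathcal{D}_3(\rho)=\sum_{0\ne n\in\scrL^3}\frac{e^{-\i\rho|n|}}{4\pi|n|},
\end{equation*}
where $\mathcal{D}_3(\rho)$ decays exponentially on $\Im\rho=-\sigma$ as $|\Re\rho|\to\infty$. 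The contour-integration scheme of Sections~\ref{Sec:Trace}--\ref{Sec:Trace2} now carries over and, in fact, simplifies: $S_\varphi$ grows polynomially rather than logarithmically, so the $3$D analog of Proposition~\ref{vanishing} is immediate from a polynomial bound on $S_\varphi$. The outcome is a trace formula
\begin{equation*}
\sum_{j=0}^\infty\bigl\{h(\rho_j^\varphi)-h(\rho_j)\bigr\}=\frac{1}{2\pi\i}\int_{-\i\sigma-\infty}^{-\i\sigma+\infty}\frac{h(\rho)\,d\rho}{\rho-\tau(\varphi)}+I_{\mathrm{diff}}(h),
\end{equation*}
with $\tau(\varphi)=-4\pi\i\,c_3(\varphi)$ and $I_{\mathrm{diff}}(h)$ the natural analog of the diffractive term of Theorem~\ref{TF}.

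Inserting $h(\rho)=e^{-\beta\rho^2}$, the smooth term is evaluated by shifting the contour from $\Im\rho=-\sigma$ up to the real axis. The partial fraction $1/(\rho\mp\i t)=(\rho\pm\i t)/(\rho^2+t^2)$, odd-part cancellation, and the Gaussian identity $\int_{-\infty}^\infty e^{-\beta\rho^2}/(\rho^2+t^2)\,d\rho=(\pi/t)e^{\beta t^2}\operatorname{erfc}(t\sqrt\beta)$ (for $t>0$) combine with the residue picked up whenever $\tau(\varphi)$ lies in the contour-shift strip to produce
\begin{equation*}
\text{smooth term}=\tfrac12+O(\sqrt\beta),
\end{equation*}
independently of the sign of $c_3(\varphi)$. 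The diffractive term is $O(\beta^{1/2})$ by an argument parallel to Proposition~\ref{diffbound} but much simpler, using the exponential decay of $\mathcal{D}_3$ along the contour against $\|h'\|_{L^1}\ll 1$. Combining these with the approximation lemma yields $\sum_j d_j e^{-\beta\eta_j^\varphi}=(2\beta)^{-1}+O(\beta^{-3/4})$, as required.

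The principal technical obstacle is the careful identification of the universal constant $\tfrac12$ in the smooth term. When $c_3(\varphi)<0$ the pole $\tau(\varphi)=4\pi\i|c_3(\varphi)|$ lies in the upper half plane and is not crossed during the contour shift, so the smooth term equals the principal-value integral alone, which evaluates via the $\operatorname{erfc}$ identity to $\tfrac12+O(\sqrt\beta)$. When $c_3(\varphi)>0$, $\tau(\varphi)=-4\pi\i\,c_3(\varphi)$ lies inside the contour-shift strip (provided $\sigma>4\pi c_3(\varphi)$, which is guaranteed by the analog of condition~\eqref{cond}) and contributes a residue $e^{-\beta\tau(\varphi)^2}=1+O(\beta)$, while the principal-value integral flips sign to $-\tfrac12+O(\sqrt\beta)$; the two contributions combine to $\tfrac12+O(\sqrt\beta)$ once more. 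That this constant is a universal $\tfrac12$ rather than merely $O(1)$ is the essential new phenomenon in dimension $3$: via Karamata's Tauberian theorem it gives Theorem~\ref{intermediate}, encoding the fact that on average the perturbed eigenvalue $\eta_j^\varphi$ sits at the midpoint of its bracketing interval $(\eta_{j-1},\eta_j)$.
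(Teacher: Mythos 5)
Your overall strategy — approximation lemma plus a $3$D trace formula with the Gaussian test function — matches the paper's, and your alternative handling of the smooth term (pulling the constant out so the pole sits at $\tau(\varphi)=-4\pi\i c_3(\varphi)$ rather than at the origin, then invoking the $\operatorname{erfc}$ identity and tracking the residue by the sign of $c_3(\varphi)$) is a legitimate variant of the paper's computation, which instead keeps the constant inside the bounded diffractive factor and gets a clean $\tfrac12 h(0)$ from the residue at $\rho=0$. But there is a genuine gap in the approximation step, and a couple of misstatements in the diffractive bound.

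The claim $d_j\ll 1$, which you justify by saying the gap between consecutive norms of the ternary form $q(m,n,k)=am^2+bn^2+ck^2$ is bounded, is not established and is not expected to hold for general $a,b,c>0$. The observation $d_j\le\eta_j-\eta_{j-1}$ is correct, but bounded gaps between consecutive values of $q$ is a subtle Diophantine statement: the naive shell-count $N(T+1)-N(T)$ has main term of size $T^{1/2}$ while the sphere-problem error term is of strictly larger order, so one cannot conclude that every unit interval contains a norm. The paper avoids this entirely by a greedy argument (Lemma~\ref{greedy3d}): choose $m,n,k$ in turn to reduce the remainder by a square root each time, giving $d_j\ll\eta_j^{1/8}$, which is what actually feeds into Lemma~\ref{approx3d} and produces the stated error $O(\beta^{-3/4})$. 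Note also that your argument is internally inconsistent: if $d_j\ll 1$ were true, your own approximation estimate would give $O(\beta^{-1/2})$, \emph{strictly better} than the $O(\beta^{-3/4})$ you assert in the final line, so the fact that you recover exactly $O(\beta^{-3/4})$ signals that the exponent in the $d_j$ bound was not really used. Separately, your assertion that $\mathcal{D}_3(\rho)$ "decays exponentially on $\Im\rho=-\sigma$ as $|\Re\rho|\to\infty$" is false; the sum $\sum_{n\neq 0}e^{-\i\rho|n|}/(4\pi|n|)$ is uniformly $O_\sigma(1)$ on that line but does not decay in $\Re\rho$. The $O(\sqrt\beta)$ bound on the diffractive term comes instead from the $1/|\rho|$ gained inside $\log\bigl(1+4\pi\i D_3^\varphi(\rho)/\rho\bigr)$, which cancels the $\rho$ from $h'(\rho)=-2\beta\rho e^{-\beta\rho^2}$ and leaves $\beta\int e^{-\beta s^2}\,ds\asymp\sqrt\beta$, as in the paper's estimate.
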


\subsection{The Green's function in three dimensions}
The free Green's function on $\R^3$ is given by (cf. \cite{Z}, p. 842, eq. (3.4))
\begin{equation}
g_\eta(x,x_0)=\frac{e^{-\i\rho|x-x_0|}}{4\pi|x-x_0|}, \qquad \Im\rho<0, \qquad \rho^2=\eta.
\end{equation}

We periodise to obtain the Green's function on $\T^3$.
\begin{equation}
G_\eta(x,x_0)=\frac{1}{4\pi}\sum_{n\in\scrL^3}\frac{e^{-\i\rho|x-x_0+n|}}{|x-x_0+n|}
\end{equation}

In particular the deficiency elements are given by
\begin{equation}
G_{\pm\i}(x,x_0)=\frac{1}{4\pi}\sum_{n\in\scrL^3}\frac{e^{-\frac{|x-x_0+n|}{\sqrt{2}}}\cos\left(\frac{|x-x_0+n|}{\sqrt{2}}\right)}{|x-x_0+n|}
\end{equation}
where we note that $\pm\i=\left(\frac{\pm1-\i}{\sqrt{2}}\right)^2$.

The spectral function is given by
\begin{equation}\label{spectral}
S_3^\varphi(\rho)=\lim_{x\to x_0}(G_\lambda-\Re{G_\i})(x,x_0)-\tan\frac{\varphi}{2}=\frac{-\i\rho}{4\pi}+D^\varphi_3(\rho)
\end{equation}
where
\begin{equation}\label{diffractive}
D^\varphi_3(\rho)=-\tan\frac{\varphi}{2}+\frac{1}{4\pi\sqrt{2}}+\frac{1}{4\pi}\sum_{n\in\scrN^3} r_3(n)\frac{e^{-\i\rho n}-e^{-\frac{n}{\sqrt{2}}}\cos\left(\frac{n}{\sqrt{2}}\right)}{n}
\end{equation}
where the first and second terms in \eqref{spectral} and \eqref{diffractive} come from the regularisation (where $h=x-x_0$)
\begin{equation}
\begin{split}
&\lim_{h\to0}\frac{e^{-\i\rho|h|}
-e^{-\frac{|h+n|}{\sqrt{2}}}\cos\left(\frac{|h+n|}{\sqrt{2}}\right)}{4\pi|h|}\\
&=\lim_{h\to0}\frac{1-\i\rho|h|+O(|h|^2)-(1-\frac{|h|}{\sqrt{2}})(1+O(|h|^2))}{4\pi|h|}\\
&=\frac{-\i\rho}{4\pi}+\frac{1}{4\pi\sqrt{2}}.
\end{split}
\end{equation}

\subsection{The trace formula in three dimensions}

We require the following lemma.
\begin{lem}
For sufficiently large $\sigma>|\Im\rho_0^\varphi|$ we have for $\Im\rho=-\sigma$
\begin{equation}\label{crit3d}
\frac{4\pi|D^\varphi_3(\rho)|}{|\rho|}<1.
\end{equation}
\end{lem}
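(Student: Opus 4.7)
The plan is to mirror the two-dimensional estimate preceding \eqref{cond}: bound $|D_3^\varphi(\rho)|$ uniformly on the horizontal line $\Im\rho = -\sigma$ and then exploit $|\rho|\geq\sigma$ to force the ratio below $1$ by taking $\sigma$ large enough.

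First I would parametrize $\rho = u - \i\sigma$ with $u\in\R$ and examine the diffractive sum \eqref{diffractive} term by term. The exponential in the first piece satisfies $|e^{-\i\rho\sqrt{n}}| = e^{-\sigma\sqrt{n}}$, depending only on $\Im\rho$, while the comparison term coming from the regularisation by $G_{\i}$, namely $e^{-\sqrt{n}/\sqrt{2}}\cos(\sqrt{n}/\sqrt{2})$, is bounded in modulus by $e^{-\sqrt{n}/\sqrt{2}}$ independently of $\rho$. Combined with the standard bound $r_3(n)\ll n^{1/2+\epsilon}$ for the three-dimensional lattice, the two tails
\begin{equation*}
\sum_{n\in\scrN^3} r_3(n)\,\frac{e^{-\sigma_0\sqrt{n}}}{\sqrt{n}},\qquad \sum_{n\in\scrN^3} r_3(n)\,\frac{e^{-\sqrt{n}/\sqrt{2}}}{\sqrt{n}}
\end{equation*}
converge absolutely for any fixed $\sigma_0>0$. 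Together with the two constants $-\tan(\varphi/2)$ and $1/(4\pi\sqrt 2)$, this yields an estimate $|D_3^\varphi(\rho)|\leq C(\varphi)$ uniform on the entire line $\Im\rho=-\sigma$ as soon as $\sigma\geq\sigma_0$.

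Since $|\rho|\geq|\Im\rho|=\sigma$ on this line, we then conclude
\begin{equation*}
\frac{4\pi\,|D_3^\varphi(\rho)|}{|\rho|}\,\leq\,\frac{4\pi\,C(\varphi)}{\sigma},
\end{equation*}
which drops below $1$ as soon as $\sigma>\max\{|\Im\rho_0^\varphi|,\,4\pi C(\varphi)\}$; any such $\sigma$ proves the lemma.

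The argument is in essence just a uniform absolute convergence statement, and the only (mild) point to watch is that the $\rho$-dependence of the summands enters solely through $\Im\rho$, so the bound $C(\varphi)$ is automatically uniform in the real part of $\rho$. In contrast with the $2$-dimensional estimate \eqref{cond}, no logarithmic correction appears: in dimension three the smooth part of $S_3^\varphi$ grows linearly in $\rho$, so the bounded diffractive remainder $D_3^\varphi$ is dominated by $\i\rho/(4\pi)$ at a much cleaner rate once $\sigma$ is large.
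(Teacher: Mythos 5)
Your proof is correct and follows essentially the same route as the paper: the paper simply asserts that $|D_3^\varphi(\rho)|=O(1)$ on the line $\Im\rho=-\sigma$ and combines it with $|\rho|\geq\sigma$, whereas you spell out the absolute-convergence details behind that $O(1)$ bound (exponential decay of the lattice sum beating the polynomial growth of $r_3$). The closing remark contrasting with the logarithmic growth of the smooth term in the two-dimensional case is accurate but not needed for the lemma itself.
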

\begin{proof}
For $\Im\rho=-\sigma$ it can easily be seen from \eqref{diffractive} that $|D_3^\varphi(\rho)|=O(1)$. Furthermore $|\rho|\geq\sigma$. So \eqref{crit3d} certainly holds for sufficiently large $\sigma$.
\end{proof}

We have the following trace formula for a point scatterer in three dimensions.
\begin{thm}\label{TF3d}
Let $h$ be as above and $\sigma>|\Im\rho_0^\varphi|$ large enough such that \eqref{crit3d} is satisfied. Then we have
\begin{equation}
\begin{split}
&\sum_{j=0}^\infty\lbrace h(\rho_j^\varphi)-h(\rho_j)\rbrace\\
=&\tfrac{1}{2}h(0)+\frac{1}{2\pi\i}\int_{-\i\sigma-\infty}^{-\i\sigma+\infty}h'(\rho)\log\left(1+\frac{4\pi\i D^\varphi_3(\rho)}{\rho}\right)d\rho
\end{split}
\end{equation}
\end{thm}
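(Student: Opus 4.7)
The plan is to follow the derivation of the two-dimensional trace formula (Theorem~\ref{TF}) step by step. Fix $\sigma > |\Im\rho_0^\varphi|$ so that \eqref{crit3d} holds and choose $T$ away from $\{\rho_j\}\cup\{\rho_j^\varphi\}$; integrate $h(\rho)(S_3^\varphi)'/S_3^\varphi$ around the box $B(\sigma,T)$. The spectral expansion shows that $S_3^\varphi$ has simple zeros at $\pm\rho_j^\varphi$ (including the imaginary pair $\pm\rho_0^\varphi$, which lies inside the box since $\sigma > |\Im\rho_0^\varphi|$), simple poles at $\pm\rho_j$ for $j\geq 1$, and a double pole at $\rho=0$ inherited from the $-1/\rho^2$ term. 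Since $h$ is even, the argument principle gives
\[
\frac{1}{2\pi\i}\oint_{\partial B(\sigma,T)} h(\rho)\frac{(S_3^\varphi)'}{S_3^\varphi}(\rho)\,d\rho = 2\sum_{j\geq 0}\{h(\rho_j^\varphi)-h(\rho_j)\}.
\]

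Next, on $\Im\rho = -\sigma$ I factor $S_3^\varphi(\rho) = (-\i\rho/(4\pi))(1 + 4\pi\i D_3^\varphi(\rho)/\rho)$, which splits the logarithmic derivative as
\[
\frac{(S_3^\varphi)'}{S_3^\varphi}(\rho) = \frac{1}{\rho} + \frac{d}{d\rho}\log\!\left(1 + \frac{4\pi\i D_3^\varphi(\rho)}{\rho}\right),
\]
with an analogous factorisation on $\Im\rho = +\sigma$ obtained via the symmetry $S_3^\varphi(\rho) = S_3^\varphi(-\rho)$. The smooth piece contributes $\frac{1}{2\pi\i}\oint h(\rho)/\rho\,d\rho = h(0)$ by residues, which produces the $\tfrac12 h(0)$ term in the theorem after dividing through by $2$. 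For the diffractive piece I integrate by parts on each horizontal segment; the substitution $\rho \mapsto -\rho$ together with $h'$ being odd identifies the top-contour integral with the bottom one, so the two contributions add to give twice the integral on the bottom contour that appears in the statement.

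The main obstacle, as in the two-dimensional case, is justifying the limit $T \to \infty$, that is, the vanishing of the vertical side contributions on $\Re\rho = \pm T$ and of the IBP boundary terms. I would mirror Proposition~\ref{vanishing}, constructing an increasing sequence $T_n$ in the midpoints of sufficiently large gaps between a norm $\rho_{k(n)}$ and the adjacent perturbed eigenvalue $\rho_{k(n)+1}^\varphi$. Analogues of Lemma~\ref{polybound}, Lemma~\ref{avbound} and Lemma~\ref{testfunction} will then yield a polynomial lower bound on $|S_3^\varphi(T_n+\i w)|$ for $w\in[-\sigma,0]$, an averaged log-bound on the vertical segments, and ultimately the vanishing of these boundary terms. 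This step should be technically cleaner than in dimension $2$, because the leading behaviour of $S_3^\varphi$ on the horizontal contour is the polynomial $-\i\rho/(4\pi)$ rather than a logarithm, so the analogue of the exponent $4+\epsilon$ in Lemma~\ref{polybound} should be replaced by a constant. Once the limit is justified, the trace formula follows upon dividing the contour identity through by $2$.
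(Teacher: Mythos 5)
Your proposal is correct and follows essentially the same route as the paper: box contour integral with the argument principle, split the logarithmic derivative via the factorisation $S_3^\varphi(\rho) = (-\i\rho/4\pi)\bigl(1+4\pi\i D_3^\varphi(\rho)/\rho\bigr)$, evaluate the smooth piece by a contour shift through the simple pole at $\rho=0$ to get $\tfrac12 h(0)$, and handle the $T\to\infty$ limit by mirroring Proposition~\ref{vanishing} and its supporting lemmas --- the paper compresses this last step into ``following the argument in the proof of the trace formula for two dimensions''. Your remark that the boundary analysis along the sequence $T_n$ is cleaner in dimension $3$ because the leading term of $S_3^\varphi$ is linear in $\rho$ rather than logarithmic is also apt.
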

\begin{proof}
Following the argument in the proof of the trace formula for two dimensions we obtain for $\sigma>|\Im\rho_0^\varphi|$ the trace formula (an analogue of Krein's famous trace formula)
\begin{equation}
\sum_{j=0}^\infty\lbrace h(\rho_j^\varphi)-h(\rho_j)\rbrace
=\frac{1}{2\pi\i}\int_{-\i\sigma-\infty}^{-\i\sigma+\infty}h'(\rho)\log S_3(\rho)d\rho.
\end{equation}
In view of \eqref{spectral} we rewrite the r. h. s. as
\begin{equation}
\begin{split}
&\frac{1}{2\pi\i}\int_{-\i\sigma-\infty}^{-\i\sigma+\infty}h'(\rho)\log S_3(\rho)d\rho\\
=&\frac{1}{2\pi\i}\int_{-\i\sigma-\infty}^{-\i\sigma+\infty}h'(\rho)\log\left(\frac{-\i\rho}{4\pi}\right)d\rho\\
&+\frac{1}{2\pi\i}\int_{-\i\sigma-\infty}^{-\i\sigma+\infty}h'(\rho)\log \left(1+\frac{4\pi\i D_3^\varphi(\rho)}{\rho}\right)d\rho
\end{split}
\end{equation}
and the first term can be evaluated by integration by parts and shifting the contour
\begin{equation}
\begin{split}
&\frac{1}{2\pi\i}\int_{-\i\sigma-\infty}^{-\i\sigma+\infty}h'(\rho)\log\left(\frac{-\i\rho}{4\pi}\right)d\rho\\
=&\frac{1}{2\pi\i}\int_{-\i\sigma-\infty}^{-\i\sigma+\infty}\frac{h(\rho)d\rho}{\rho}\\
=&\frac{1}{2\pi\i}\int_{C_\delta}\frac{h(\rho)d\rho}{\rho}
+\underbrace{\frac{1}{2\pi\i}\int_{\R\setminus(-\delta,\delta)}\frac{h(\rho)d\rho}{\rho}}_{=0}\\
=&\tfrac{1}{2}h(0)
\end{split}
\end{equation}
where we recall that for some small $\delta$ the contour $C_\delta$ denotes the lower semicircle connecting $-\delta$ and $\delta$ on the real line.
\end{proof}

\subsection{Proof of Proposition~\ref{main3d}}

Let $h(\rho)=e^{-\beta\rho^2}$.
In this case the trace formula gives us
\begin{equation}
\begin{split}
&\sum_{j=0}^\infty \lbrace e^{-\beta\eta_j^\varphi}-e^{-\beta\eta_j}\rbrace\\
=&\frac{1}{2}-\frac{\beta}{\pi\i}\int_{-\i\sigma-\infty}^{-\i\sigma+\infty}\rho e^{-\beta\rho^2}\log\left(1+\frac{4\pi\i D^\varphi_3(\rho)}{\rho}\right)d\rho
\end{split}
\end{equation}
and in view of \eqref{crit3d} we have for $\Im\rho=-\sigma$ the bound
\begin{equation}
\left|\log\left(1+\frac{4\pi\i D^\varphi_3(\rho)}{\rho}\right)\right|\ll \frac{|D^\varphi_3(\rho)|}{|\rho|}
\end{equation}
which implies
\begin{equation}
\begin{split}
&\int_{-\i\sigma-\infty}^{-\i\sigma+\infty}|\rho||e^{-\beta\rho^2}|\left|\log\left(1+\frac{4\pi\i D^\varphi_3(\rho)}{\rho}\right)\right||d\rho|\\
\ll &\int_{-\i\sigma-\infty}^{-\i\sigma+\infty}|e^{-\beta\rho^2}||D^\varphi_3(\rho)||d\rho|\\
\ll &\; e^{\beta\sigma^2}\int_{-\infty}^\infty e^{-\beta t^2}dt=O\left(\frac{1}{\sqrt{\beta}}\right)
\end{split}
\end{equation}
and therefore
\begin{equation}
\frac{1}{\beta}\sum_{j=0}^\infty \lbrace e^{-\beta\eta_j^\varphi}-e^{-\beta\eta_j}\rbrace=\frac{1}{2\beta}
+O\left(\frac{1}{\sqrt{\beta}}\right).
\end{equation}

An analogue of the greedy algorithm is required for the proof of Proposition~\ref{main3d}. We state this as a lemma.
\begin{lem}\label{greedy3d}
Let $d_j:=\eta_j-\eta_j^\varphi$. We have the bound
\begin{equation}
d_j\ll\eta_j^{1/8}.
\end{equation}
\end{lem}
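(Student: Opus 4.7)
The plan is to adapt to three dimensions the greedy algorithm of \cite{RU} which established the two-dimensional analogue $d_j \ll n_j^{1/4}$.

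First, I would use the interlacing $\eta_{j-1} < \eta_j^\varphi < \eta_j$ to reduce the problem to a lattice-point question. Trivially,
$$d_j \le \eta_j - \eta_{j-1} =: \delta_{j-1},$$
so it suffices to bound the gap $\delta_{j-1}$ between consecutive squared norms of vectors in the dual lattice $\scrL^3$. These norms are the values taken by the diagonal ternary quadratic form
$$Q(k_1, k_2, k_3) = a^2 b^2 k_1^2 + k_2^2/a^2 + k_3^2/b^2.$$

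Next, I would establish the bound $\delta_{j-1} \ll \eta_j^{1/8}$ via a lattice-point counting argument in a thin spherical shell of the ellipsoid $\{Q \le T\}$ with $T$ of order $\eta_j$. The Euclidean volume of such a shell of thickness $R$ has order $T^{1/2} R$; combining this main term with a sufficiently sharp bound on the error in Weyl-type counts for ternary ellipsoids yields non-emptiness of the shell for $R$ of order $\eta_j^{1/8}$, hence the desired gap bound. In the two extreme cases the bound is immediate: for a rational torus the squared norms are integers with gaps $O(1)$, while for a generic irrational torus the norms are dense in $\R_{\geq 0}$ so $\delta_{j-1} \to 0$.

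The main obstacle will be obtaining a uniform shell-count estimate with a sharp enough error term for the full family of ternary forms $Q$ as $a, b$ vary; this is delicate in the intermediate cases (where $a^2, b^2$ satisfy partial rationality conditions) and would rely on classical estimates for the three-dimensional Gauss sphere problem combined with Diophantine properties of $a, b$. As a fallback, I would argue directly from the spectral equation $S_3^\varphi(\rho_j^\varphi) = 0$: isolating the pole of $S_3^\varphi$ at $\rho = \rho_j$ (where $\rho_j^2 = \eta_j$) gives
$$d_j = \frac{r_3(\eta_j)}{|\widetilde R_j(\rho_j^\varphi)|},$$
where $\widetilde R_j(\rho) := S_3^\varphi(\rho) - r_3(\eta_j)/(\eta_j - \rho^2)$ is analytic at $\rho_j$. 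The divisor-type bound $r_3(\eta_j) \ll \eta_j^{1/2 + \epsilon}$ would then reduce the task to a pointwise lower bound of the form $|\widetilde R_j(\rho_j^\varphi)| \gg \eta_j^{3/8 - \epsilon}$, obtainable by exploiting the dominant contributions to $\widetilde R_j$ from the neighbouring poles at $\rho_{j-1}$ and $\rho_{j+1}$ together with the same shell-count input.
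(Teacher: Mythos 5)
You open by saying you want to adapt the greedy algorithm of \cite{RU}, but the argument you then sketch abandons that idea in favor of a lattice-point-in-thin-shell count, which is a genuinely different and much harder approach, and it does not close. The shell $\{T<Q\le T+R\}$ has Euclidean volume $\asymp T^{1/2}R$, but to conclude the shell is nonempty you need the error in the lattice-point count of the ellipsoid $\{Q\le T\}$ to be $o(T^{1/2}R)$. Even for the classical rational sphere the best known error is around $T^{21/32+\epsilon}$, which already forces $R\gg T^{5/32}$, worse than the $T^{1/8}$ you need; and for ``intermediate'' irrational $a,b,c$ no uniform error term of this quality is available. You acknowledge this obstruction but offer no way around it, and your fallback via the spectral function (lower-bounding $|\widetilde R_j(\rho_j^\varphi)|$) silently reintroduces exactly the same gap-between-norms input, so it is circular.

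The missing idea is that you do not need to count: you can \emph{construct} a lattice value close to $\eta_j^\varphi$ by the greedy algorithm, one variable at a time, and this is elementary and uniform in $a,b,c>0$. With $q(m,n,k)=am^2+bn^2+ck^2$ set $m=\lfloor\sqrt{\eta_j^\varphi/a}\rfloor$, so $s_1:=\eta_j^\varphi-am^2\ll\sqrt{\eta_j^\varphi}$; then $n=\lfloor\sqrt{s_1/b}\rfloor$, so $s_2:=s_1-bn^2\ll\sqrt{s_1}\ll\eta_j^{1/4}$; then $k=\lceil\sqrt{s_2/c}\rceil$, so $ck^2\ge s_2$ and $ck^2-s_2\ll\sqrt{s_2}\ll\eta_j^{1/8}$. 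This produces a norm
\begin{equation}
q(m,n,k)=am^2+bn^2+ck^2\ge\eta_j^\varphi,\qquad q(m,n,k)-\eta_j^\varphi\ll\eta_j^{1/8},
\end{equation}
so the smallest norm $\eta_j$ above $\eta_j^\varphi$ satisfies $d_j=\eta_j-\eta_j^\varphi\le q(m,n,k)-\eta_j^\varphi\ll\eta_j^{1/8}$. Each greedy step costs a square root, giving $1/2\to1/4\to1/8$, exactly matching the exponent; no Diophantine hypothesis on $a,b,c$ is needed, and there is no appeal to shell counts at all.
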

\begin{proof}
Recall that each $\eta_j$ is of the form $q(m,n,k)=am^2+bn^2+ck^2$ for real numbers $a,b,c>0$ and integers $m,n,k$. We need to show that for each $j$ we can pick $m,n,k$ such that $$|\eta_j^\varphi-q(m,n,k)|\ll\eta_j^{1/8}$$ where the implied constant does not depend on our choice. Let $s_1=\eta_j^\varphi-am^2$. Let $m=\left\lfloor \sqrt{\eta_j^\varphi/a}\right\rfloor$ and observe that $s_1=\eta_j^\varphi-am^2\ll(\eta_j^\varphi)^{1/2}<\eta_j^{1/2}$. Now let $s_2=s_1-bn^2$ and choose $n=\left\lfloor \sqrt{s_1/b}\right\rfloor$ so that $s_2=s_1-bn^2\ll s_1^{1/2}$. Finally choose $k=\left\lfloor \sqrt{s_2/c}\right\rfloor$ such that $s_2-ck^2\ll s_2^{1/2}$. With the above choices of $m,n,k$ we have $$|\eta_j^\varphi-q(m,n,k)|=s_2-ck^2\ll s_2^{1/2}\ll s_1^{1/4}\ll \eta_j^{1/8}.$$
\end{proof}
The following lemma implies Proposition~\ref{main3d}.
\begin{lem}\label{approx3d}
We have the following identity
\begin{equation}
\sum_{j=0}^\infty d_j e^{-\beta\eta_j^\varphi}=\frac{1}{\beta}\sum_{j=0}^\infty \lbrace e^{-\beta\eta_j^\varphi}-e^{-\beta\eta_j}\rbrace+O(\beta^{-3/4})
\end{equation}
\end{lem}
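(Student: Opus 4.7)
The plan is to mirror the argument of Lemma~\ref{approxlemma} in the two-dimensional case, with the new ingredient being the three-dimensional Weyl law combined with the weaker (but still sufficient) size bound $d_j\ll\eta_j^{1/8}$ from Lemma~\ref{greedy3d}.

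First, I would use the algebraic identity
\begin{equation*}
\frac{1}{\beta}\bigl\{e^{-\beta\eta_j^\varphi}-e^{-\beta\eta_j}\bigr\}
=e^{-\beta\eta_j^\varphi}\,\frac{1-e^{-\beta d_j}}{\beta}
=d_j\,e^{-\beta\eta_j^\varphi}\cdot\frac{1-e^{-\beta d_j}}{\beta d_j},
\end{equation*}
valid since $d_j=\eta_j-\eta_j^\varphi>0$. Subtracting the desired main term then reduces the claim to the estimate
\begin{equation*}
\sum_{j=0}^\infty d_j\,e^{-\beta\eta_j^\varphi}\left(1-\frac{1-e^{-\beta d_j}}{\beta d_j}\right)\ll \beta^{-3/4}.
\end{equation*}

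Next, I would invoke the elementary inequality $0<1-\frac{1-e^{-x}}{x}<x$ for $x>0$, applied with $x=\beta d_j$, which bounds the left-hand side by $\beta\sum_j d_j^{2}\,e^{-\beta\eta_j^\varphi}$. By Lemma~\ref{greedy3d} we have $d_j\ll\eta_j^{1/8}$ for $j\geq1$, so $d_j^{2}\ll\eta_j^{1/4}$, while the $j=0$ contribution is absorbed into a $O(1)$ term (since $\eta_0^\varphi<0$ is bounded). Using the interlacing $\eta_j^\varphi>\eta_{j-1}$ to relate $e^{-\beta\eta_j^\varphi}$ to $e^{-\beta\eta_{j-1}}$, this leaves us to control $\beta\sum_{j\geq1} \eta_j^{1/4}\,e^{-\beta\eta_{j-1}}$.

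Finally, I would apply Weyl's law in dimension three, namely that the counting function of the distinct norms (or, equivalently, the eigenvalues with multiplicity) satisfies $N_3(x)\ll x^{3/2}$, so that $dN_3(x)\ll x^{1/2}\,dx$. This allows us to dominate the sum by an integral:
\begin{equation*}
\beta\sum_{j\geq 1}\eta_j^{1/4}e^{-\beta\eta_{j-1}}\ll \beta\int_0^\infty x^{1/4}e^{-\beta x}\,x^{1/2}\,dx=\beta\int_0^\infty x^{3/4}e^{-\beta x}\,dx\ll \beta^{-3/4},
\end{equation*}
which yields the desired bound.

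The only delicate point is the step that trades the sum for the integral via Weyl's law and the interlacing; the exponent $3/4$ arises precisely from the balance $\tfrac{1}{4}$ (from $d_j^2\ll\eta_j^{1/4}$) plus $\tfrac{1}{2}$ (from the $x^{1/2}$ Weyl density in dimension three) against the gain $\beta\cdot\beta^{-(1+\alpha)}$ for $\alpha=3/4$. This is the main obstacle to a stronger bound, since the greedy argument in three dimensions produces the weaker exponent $1/8$ rather than the $1/4$ one has in two dimensions.
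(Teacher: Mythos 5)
Your proof is correct and follows essentially the same path as the paper's: reduce to bounding $\beta\sum_j d_j^2 e^{-\beta\eta_j^\varphi}$ via the elementary inequality $0<1-\frac{1-e^{-x}}{x}<x$, invoke $d_j\ll\eta_j^{1/8}$ from Lemma~\ref{greedy3d}, shift via interlacing to $e^{-\beta\eta_{j-1}}$, and dominate the sum by $\beta\int_0^\infty x^{1/4}e^{-\beta x}\,x^{1/2}\,dx\ll\beta^{-3/4}$ using the 3D Weyl law. Your analysis of where the exponent $3/4$ comes from matches the paper's mechanism exactly.
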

\begin{proof}
It is sufficient to prove the bound
\begin{equation}
\sum_{j=0}^\infty d_j e^{-\beta\lambda_j^\varphi}\left(1-\frac{1-e^{-\beta d_j}}{\beta d_j}\right)
\ll\beta^{-3/4}.
\end{equation}
For $x>0$ we have the inequality $$0<1-\frac{1-e^{-x}}{x}<x.$$ It follows from the inequality and Lemma \ref{greedy3d}
\begin{equation}
\begin{split}
\sum_{j=0}^\infty d_j e^{-\beta\eta_j^\varphi}\left(1-\frac{1-e^{-\beta d_j}}{\beta d_j}\right)
<&\beta\sum_{j=0}^\infty d_j^2 e^{-\beta \eta_j^\varphi}\\
\ll& \beta\sum_{j=1}^\infty\eta_j^{1/4}e^{-\beta\eta_j^\varphi}+\beta e^{-\beta\eta_0^\varphi}\\
<&\beta\sum_{j=0}^\infty\eta_{j+1}^{1/4}e^{-\beta\eta_j}+\beta e^{-\beta\eta_0^\varphi}
\end{split}
\end{equation}
and the bound $N_\varphi(x)\ll x^{3/2}$ permits us to estimate the sum by the following integral:
\begin{equation}
\beta\sum_{j=0}^\infty\eta_{j+1}^{1/4}e^{-\beta\eta_j}\ll \beta\int_{0}^\infty x^{1/4}e^{-\beta x}x^{1/2}dx
\ll\beta^{-3/4}.
\end{equation}
\end{proof}

\end{document}